\documentclass[aps,twocolumn,superscriptaddress,amsmath,amssymb,10pt,floatfix,nofootinbib]{revtex4}
\usepackage{amsthm,graphicx,mathrsfs,bm}
\usepackage[usenames,dvipsnames,svgnames,table]{xcolor}
\usepackage[unicode=true,pdfusetitle,
 bookmarks=true,bookmarksnumbered=false,bookmarksopen=false,
 breaklinks=true,pdfborder={0 0 0},backref=false,colorlinks=true]
 {hyperref}
\hypersetup{
 linkcolor=NavyBlue,urlcolor=NavyBlue,citecolor=NavyBlue}

\newcommand{\bra}[1]{{\langle{#1}|}}
\newcommand{\ket}[1]{{|{#1}\rangle}}
\newcommand{\braket}[1]{\mathinner{\langle{#1}\rangle}}
\newcommand{\cC}{\mathcal{C}}

\newcommand{\cK}{\mathcal{K}}
\newcommand{\cL}{\mathcal{L}} 
\newcommand{\cN}{\mathcal{N}}

\DeclareMathOperator{\tr}{Tr} 
\newtheorem{theorem}{Theorem}
\newtheorem{lemma}{Lemma}

\begin{document}

\title{Robust quantum metrological schemes based on protection of quantum Fisher information}

\author{Xiao-Ming Lu}
\affiliation{Centre for Quantum Technologies, National University of Singapore, 3 Science Drive 2, Singapore 117543, Singapore}
\affiliation{Department of Electrical and Computer Engineering, National University of Singapore, 4 Engineering Drive 3, Singapore 117583, Singapore}

\author{Sixia Yu} 
\affiliation{Centre for Quantum Technologies, National University of Singapore, 3 Science Drive 2, Singapore 117543, Singapore}
\affiliation{Hefei National Laboratory for Physical Sciences at Microscale and Department of Modern Physics, University of Science and Technology of China, Hefei, Anhui 230026, China}
\author{C.H. Oh}
\affiliation{Centre for Quantum Technologies, National University of Singapore, 3 Science Drive 2, Singapore 117543, Singapore}
\affiliation{Department of Physics, National University of Singapore, 3 Science Drive 2, Singapore 117543, Singapore}

\begin{abstract} % no more than 150 words
Fragile quantum features such as entanglement are employed to improve the precision of parameter estimation and as a consequence the quantum gain becomes vulnerable to noise. As an established tool to subdue noise, quantum error correction is unfortunately overprotective because the quantum enhancement can still be achieved even if the states are irrecoverably affected, provided that the quantum Fisher information, which sets the ultimate limit to the precision of metrological schemes, is preserved and attained. 
Here, we develop a theory of robust metrological schemes that preserve the quantum Fisher information instead of the quantum states themselves against noise. After deriving a minimal set of testable conditions on this kind of robustness, we construct a family of $2t+1$ qubits metrological schemes being immune to $t$-qubit errors after the signal sensing. In comparison at least five qubits are required for correcting arbitrary 1-qubit errors in standard quantum error correction.
\end{abstract}

\maketitle

In quantum metrology, delicate and fragile quantum features are being used to enhance the sensitivity of experimental apparatus, e.g., non-classical probe states were used for the high sensitivity of optical interferometer and atomic spectroscopy~\cite{Caves1981,Yurke1986,Wineland1992,Holland1993,Dowling1998,Giovannetti2006,Giovannetti2004,Giovannetti2011}. 
However the quantum enhancement for the sensitivity may be subdued by the presence of ubiquitous and inevitable noise~\cite{Huelga1997,Escher2011,Chaves2013,Demkowicz-Dobrzanski2012,Tsang2013,Dorner2009,Rubin2007,Huver2008,Lee2009,Maccone2009,Ono2010,Joo2011,Jiang2012,Spagnolo2012,Kacprowicz2010,Demkowicz-Dobrzanski2009,Genoni2011,Genoni2012}.
Therefore it is of utmost significance to investigate the robustness of the optimal strategies for the high sensitivity against noise.
Quantum error correction (QEC) was employed in quantum metrology~\cite{Macchiavello2002,Preskill2000,Arrad2014,Kessler2014,Ozeri2013,Dur2014} to overcome noise problem, where by protecting the quantum states, on which a signal parameter is imprinted, the measurement precision for that parameter is protected.

This is no wonder because the standard QEC was originally designed for protecting all the information encoded in quantum states, i.e., the logical states in the universal quantum computation, against the noise~\cite{Shor1995,Bennett1996,Steane1996,Gottesman1996,Knill1997,Yu2008,Yu2013}.
In quantum metrology, however, what matters essentially is the distinguishability about the signal parameter that is sensed by quantum systems and encoded in quantum states. According to quantum estimation theory~\cite{Helstrom1976,Holevo1982,Braunstein1994,WisemanBook,Paris2009} this distinguishability is measured by quantum Fisher information (QFI). Therefore, preserving the QFI of a given family of states against noise is sufficient for quantum metrological schemes to work under noisy environment.
Since the QFI represents only partial information encoded in quantum states, the use of the  QEC for quantum states is obviously overprotective, which leads to unnecessary waste of resources.
Our main goal is to establish a variant theory of QEC designed for quantum metrology, namely, robust quantum metrological schemes, by taking the QFI instead of the fidelity of quantum states as the figure of merit.

In this paper, we show that analogous to QEC for quantum states the errors can also be digitalized so that we can construct a robust metrological scheme, in which the QFI is preserved under an entire class of unknown noisy processes rather than a specific one. 
Furthermore, we derive the necessary and sufficient testable conditions on preserving QFI, and construct the optimal measurements extracting the maximal distinguishability about the signal parameter in the presence of noise.
Our testable conditions describe the minimal requirements for the robustness of a parameter estimation scheme against noise, and can be used to identify the errors to which the QFI is immune.
As an example, we construct a family of metrological scheme on $2t+1$ physical qubits to protect the QFI against arbitrary errors on no more than $t$ physical qubits after the signal sensing.

\begin{figure*}[tb]
\begin{center}
 \includegraphics[width=180mm]{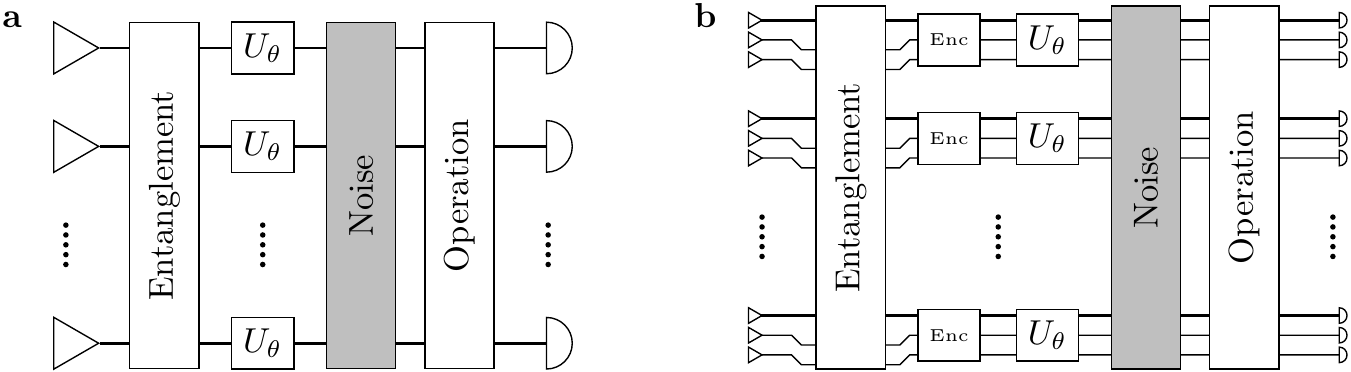}
\end{center}
\caption{\label{fig:model}
{\bf Abstract models for quantum parameter estimation.}
{\bf a}, Set-up for entanglement-enhanced metrology with noise being assumed after the sensing transformation $U_\theta^{\otimes n}$, where $n$ is the number of the qubits. 
{\bf b}, Heisenberg-limited metrological scheme with the QEC protection. 
The entangling operation (labeled with ``Entanglement'' in the figure) acts only on the first qubit (thick line) in each block, and produce the GHZ state.
The encoders (labeled with ``Enc'') encode the first qubit in each block into a phase-flip code. 
The sensing transformation $U_\theta$ is a logical phase-shift operation on each phase-flip code space.
Here, we note that the ``Entanglement'' operation plays a dual role: on one hand, it supplies the Heisenberg-scaling of the QFI, and on the other hand, it supplies a higher level of bit-flip code  in addition to the phase-flip code. }
\end{figure*}

\vspace{8pt}
\noindent\textbf{Results}\\
{\bf Quantum parameter estimation theory.}
A standard quantum metrological scheme to detect and estimate a signal parameter $\theta$ can be depicted by the following sensing transformation: 
\begin{equation}
	\rho\mapsto \rho_\theta=e^{-i\theta H}\rho e^{i\theta H}
\end{equation}
with $H$ being a known Hermitian operator and $\rho$ the probe state.
The value of the parameter is estimated through the classical data processing on the measurement outcomes obtained by repeating experiments in $\rho_\theta$.
From estimation theory~\cite{Helstrom1976,Holevo1982,Braunstein1994}, the regularized root-mean-square error of the estimator $\hat\theta$ is limited by the Cram\'er-Rao bound 
\begin{equation}
    \delta\hat{\theta}:=\Big\langle\Big(\frac{\hat\theta}{|\mathrm{d}\braket{\hat\theta}/\mathrm{d}\theta|}-\theta\Big)^2\Big\rangle^{1/2}\geq \frac 1{\sqrt{\nu F(\rho_\theta|M)}},	
\end{equation}
where $\nu$ is the number of repetitions of the experiments, and 
\begin{equation}
	F(\rho_\theta|M):=\sum_x p_\theta(x)\Big[\frac{\mathrm{d}}{\mathrm{d}\theta}\ln p_\theta(x)\Big]^2
\end{equation}
is the (classical) Fisher information extracted by the measurement $\{M_x\}$ with $p_\theta(x):= \tr(M_x\rho_\theta)$ being the probabilities of obtaining outcomes $x$. 
Here, $M_x$ are positive operators satisfying $\sum_x M_x=\bm{1}$ with $\bm{1}$ being the identity operator.
The maximal Fisher information over all possible measurements is given by the so-called QFI $F(\rho_\theta ):=\tr(\rho_\theta L_\theta ^{2}),$ where the symmetric logarithmic derivative (SLD) operator $L_\theta $ is defined as the Hermitian operator satisfying ${\mathrm{d}\rho_\theta}/{\mathrm{d}\theta}=\frac12\{L_\theta,\rho_\theta\}$ with $\{\cdot,\cdot\}$ being the anti-commutator~\cite{Braunstein1994,Helstrom1976,Holevo1982,WisemanBook,Paris2009}.
More importantly, the Cram\'er-Rao bound is asymptotically achieved~\cite{Helstrom1976,Holevo1982}, therefore, QFI can be considered as a measure on the distinguishability about the parameter in quantum states.

An optimal strategy of the quantum parameter estimation comprises the probe state maximizing the QFI and the measurement attaining the maximal Fisher information.
Taking appropriate entangled states as the probe states, a quantum metrological scheme may achieve the Heisenberg scaling of precision $1/N$, where $N$ is the number of resources employed in the experiment, e.g., the number of probes~\cite{Giovannetti2006,Giovannetti2004,Giovannetti2011}.
This is a considerable improvement over the standard quantum limit $1/\sqrt{N}$.
Nevertheless, those entanglement-enhanced strategies that are optimal for the noiseless systems easily lose the quantum gain for the noisy systems~\cite{Huelga1997,Chaves2013,Demkowicz-Dobrzanski2012,Dorner2009,Rubin2007,Huver2008,Lee2009,Maccone2009,Ono2010,Joo2011,Jiang2012,Spagnolo2012,Kacprowicz2010,Demkowicz-Dobrzanski2009,Genoni2011,Genoni2012}.

{\bf Parameter estimation in noisy cases.}
We now turn to the question of the optimal strategy in noisy cases.
We assume that the noise can be deferred until after the sensing transformation, i.e., states to be measured are $\cN(\rho_\theta)=\sum_j E_j\rho_\theta E_j^\dagger$ where $\cN$ denotes a noisy channel with Kraus operators $\{E_j\}$.
We emphasize that this noise model is applicable to the noise that commutes with the generator of the sensing transformation, occurs during the transmission or storage in the interval between the sensing and the measurement, or is induced by the measurement imperfection. 
This noise model can also be considered as an approximation when the sensing time is short~\cite{Dur2014,Kessler2014}.
A general entanglement-enhanced metrology scenario of this type is depicted in Fig~\ref{fig:model}a.
At first glance, the optimal strategy for such noisy cases might be established by seeking the optimal probe states maximizing the QFI of $\cN(\rho_\theta)$ and the corresponding optimal measurements.
Technically, this straightforward optimization needs to diagonalize the parametric family of states $\cN(\rho_\theta)$, which is often formidable and even impossible without the details of the noise.
Therefore, the optimal strategies obtained in this way are very restricted.

Based on the above considerations, protecting the involved parametric family of states with quantum error-correcting codes~\cite{Shor1995,Bennett1996,Steane1996,Gottesman1996,Knill1997}, which are applicable for the whole class of noisy channels with the Kraus operators being arbitrary linear combinations of the correctable error elements, is a good candidate of a robust strategy for quantum metrology~\cite{Arrad2014,Kessler2014,Ozeri2013,Dur2014}.
However, QEC is overprotective because the measurement precision remains the same as long as the QFI is preserved and attained, even if the quantum states might be affected by some uncorrectable errors.
Here we shall develop a theory of QEC specialized for metrology: on the one hand our theory aims at preserving the QFI instead of all the information encoded in states, which ensures that our robust quantum metrological schemes are not overprotective; on the other hand our specialized theory also possesses the great advantage of the standard QEC that the errors can be digitalized for the preservation and attainment of QFI, which is our first main result:

\begin{theorem}\label{thm:discretization}
The QFI of $\rho_\theta$ is preserved under a known channel $\cN$ with Kraus operators $\{E_j\}$ if and only if 
\begin{equation}\label{eq:NS}
    \cL_\theta E_j\sqrt{\rho_\theta} = E_j L_\theta\sqrt{\rho_\theta} \quad (\forall\, j)
\end{equation}
with $\cL_\theta$ being the SLD operator for $\cN(\rho_\theta)$.
If the QFI of $\rho_\theta$ is preserved under a known noisy channel $\cN$, then it is preserved under all noisy channels whose Kraus operators are arbitrary linear combinations of $\{E_j\}$, with the optimal measurement being the eigenstates of $\cL_\theta$.
\end{theorem}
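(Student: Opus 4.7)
The plan is to obtain the monotonicity of the QFI under $\cN$ together with its saturation condition in a single stroke, via a Cauchy--Schwarz inequality in the Hilbert--Schmidt space; the digitalization and optimal-measurement claims will then follow as short corollaries.

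First I would rewrite the QFI as
\begin{equation}
F(\cN(\rho_\theta))=\tr\!\bigl(\cL_\theta\,\partial_\theta\cN(\rho_\theta)\bigr)
=\sum_j\tr\!\bigl[(\cL_\theta E_j\sqrt{\rho_\theta})^\dagger(E_j L_\theta\sqrt{\rho_\theta})\bigr],
\end{equation}
by expanding $\partial_\theta\cN(\rho_\theta)=\sum_j E_j\tfrac12\{L_\theta,\rho_\theta\}E_j^\dagger$ and cycling the trace. Setting $A_j:=\cL_\theta E_j\sqrt{\rho_\theta}$ and $B_j:=E_j L_\theta\sqrt{\rho_\theta}$, the trace-preservation condition $\sum_j E_j^\dagger E_j=\bm1$ yields $\sum_j\|A_j\|^2=F(\cN(\rho_\theta))$ and $\sum_j\|B_j\|^2=F(\rho_\theta)$. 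Cauchy--Schwarz in the direct-sum Hilbert--Schmidt space then gives $F(\cN(\rho_\theta))^2\le F(\cN(\rho_\theta))F(\rho_\theta)$, hence monotonicity, with equality iff $A_j=cB_j$ for every $j$ with a common scalar $c$; matching the induced norm and inner-product identities forces $c=1$, which is precisely~(\ref{eq:NS}). Conversely, assuming (\ref{eq:NS}) substitutes $A_j=B_j$ into the displayed identity and immediately gives $F(\cN(\rho_\theta))=F(\rho_\theta)$.

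For the digitalization I would observe that (\ref{eq:NS}) is linear in $E_j$, so any set of Kraus operators $\tilde E_i=\sum_k c_{ik}E_k$ inherits $\cL_\theta\tilde E_i\sqrt{\rho_\theta}=\tilde E_i L_\theta\sqrt{\rho_\theta}$ with the \emph{same} $\cL_\theta$. Multiplying on the right by $\sqrt{\rho_\theta}\tilde E_i^\dagger$, adding the adjoint relation, and summing over $i$ confirms $\partial_\theta\tilde\cN(\rho_\theta)=\tfrac12\{\cL_\theta,\tilde\cN(\rho_\theta)\}$, so $\cL_\theta$ is (a choice of) the SLD of $\tilde\cN(\rho_\theta)$ and the ``if'' direction applies to $\tilde\cN$. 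For the optimal measurement, projecting onto the eigenbasis $\{\ket k\}$ of $\cL_\theta$ with eigenvalues $\lambda_k$ gives probabilities $p_\theta(k)=\bra k\cN(\rho_\theta)\ket k$; sandwiching the SLD equation between $\bra k$ and $\ket k$ yields $\partial_\theta p_\theta(k)=\lambda_k\,p_\theta(k)$, so the classical Fisher information $\sum_k\lambda_k^2 p_\theta(k)=\tr(\cN(\rho_\theta)\cL_\theta^2)=F(\cN(\rho_\theta))=F(\rho_\theta)$ saturates the Cram\'er--Rao bound.

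The main subtlety will be the non-uniqueness of the SLD on $\ker\rho_\theta$ and $\ker\cN(\rho_\theta)$. I would check that (\ref{eq:NS}) is insensitive to this ambiguity because the factor $\sqrt{\rho_\theta}$ annihilates the kernel of any representative of $L_\theta$, and that the equality case of Cauchy--Schwarz only constrains $\cL_\theta$ on $\mathrm{span}\{E_j\sqrt{\rho_\theta}\ket\psi\}\subseteq\mathrm{supp}\,\cN(\rho_\theta)$, where $\cL_\theta$ is uniquely determined. Once these support issues are handled, each of the three parts reduces to the short algebraic verifications sketched above.
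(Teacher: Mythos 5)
Your proposal is correct, and it reaches the theorem by a somewhat different route than the paper. The paper's proof runs through Ozawa's measurement-error functional $\epsilon(\rho_\theta,\cN,Q)=\langle(V^\dagger(Q\otimes\bm{1})V-L_\theta\otimes\bm{1})^2\rangle$ on a dilation, shows $\epsilon(\rho_\theta,\cN,Q)=\Delta_\mathrm{F}(\rho_\theta,\cN)+\tr[(Q-\cL_\theta)^2\cN(\rho_\theta)]$, and then sets $Q=\cL_\theta$ to obtain the completed-square identity $\Delta_\mathrm{F}(\rho_\theta,\cN)=\sum_j\lVert(\cL_\theta E_j-E_jL_\theta)\sqrt{\rho_\theta}\rVert_\mathrm{HS}^2$, from which equation~(\ref{eq:NS}) is immediate. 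You use exactly the same three trace identities ($\sum_j\lVert A_j\rVert^2=F(\cN(\rho_\theta))$, $\sum_j\lVert B_j\rVert^2=F(\rho_\theta)$, and $\mathrm{Re}\sum_j\langle A_j,B_j\rangle=F(\cN(\rho_\theta))$) but package them as a Cauchy--Schwarz inequality rather than as a sum of squares; your version is more self-contained (no dilation, no minimization over $Q$) but requires the extra bookkeeping of showing the proportionality constant is $1$, whereas the paper's identity gives $A_j=B_j$ directly and, more importantly, also yields the equivalence with the existence of \emph{some} Hermitian $Q$ satisfying $QE_j\sqrt{\rho_\theta}=E_jL_\theta\sqrt{\rho_\theta}$, which the paper reuses for the geometric picture and for Theorem~2 via Lemma~1. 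One small imprecision: your first display should read $F(\cN(\rho_\theta))=\mathrm{Re}\sum_j\tr[(\cL_\theta E_j\sqrt{\rho_\theta})^\dagger(E_jL_\theta\sqrt{\rho_\theta})]$, since the sum of Hilbert--Schmidt inner products need not be real (its imaginary part is $\tr[\cL_\theta\cN(\tfrac12[\rho_\theta,L_\theta])]$); this does not damage the argument because $|\sum_j\langle A_j,B_j\rangle|\geq\mathrm{Re}\sum_j\langle A_j,B_j\rangle$ keeps the inequality chain intact and the equality analysis still forces the constant to be real and equal to $1$. Your treatment of the digitalization (linearity of (\ref{eq:NS}) plus verifying that $\cL_\theta$ remains an SLD for $\tilde\cN(\rho_\theta)$) and of the optimal measurement (the Braunstein--Caves eigenbasis argument) matches the paper's.
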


We split the proof of Theorem~\ref{thm:discretization} into three parts.
First, we prove in the Methods that equation (1) is the necessary and sufficient condition on QFI-preserving under a known noisy channel.
Second, for an arbitrary noisy channel $\cK$ with Kraus operators $\{K_j=\sum_ic_{ij}E_j\}$ being unknown linear combinations of $\{E_j\}$, we note that equation (\ref{eq:NS}) still holds with $E_j$ being replaced by $K_j$, therefore the QFI is preserved under the unknown noisy channel $\cK$. 
Third, it is known that the complete set of the eigenstates of the SLD operator is the optimal measurement basis attaining the maximal Fisher information~\cite{Braunstein1994}. 
Through equation (\ref{eq:NS}), the SLD operator $\cL_\theta$ for $\cN(\rho_\theta)$ can be readily checked to be also the SLD operator for $\cK(\rho_\theta)$. 
Therefore, the measurement with respect to eigenstates of $\cL_\theta$ is also optimal for $\cK(\rho_\theta)$. 
Remarkably, if the measurement basis is fixed, then a recovery operation is demanded to transform the basis of the optimal measurement to the fixed one; otherwise, we only need to perform the optimal measurement for noisy states, and no recovery operation is demanded.

\begin{figure}[tb]
\begin{center}
\includegraphics[width=85mm]{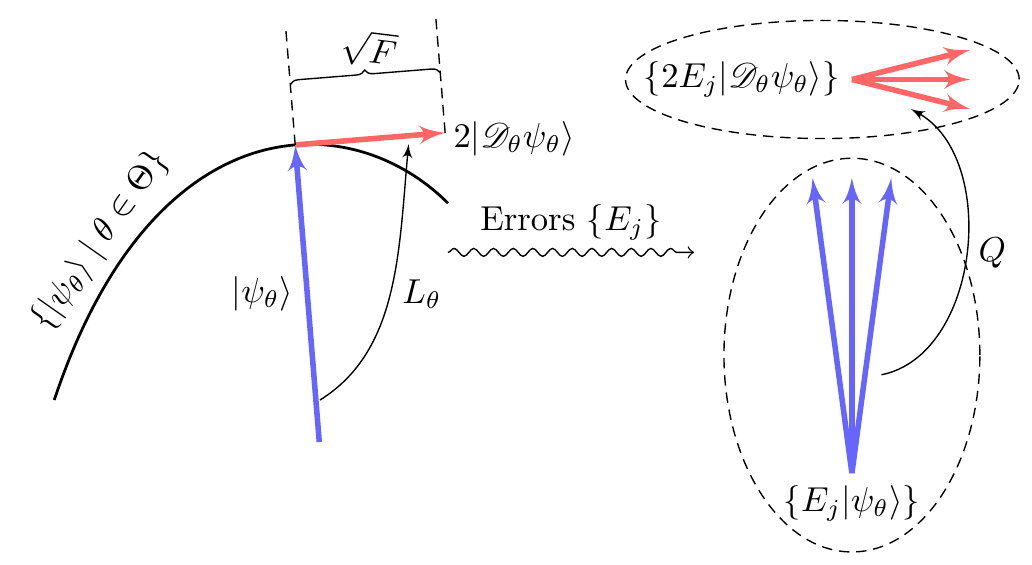}
\end{center}
\caption{\label{fig:geo}
{\bf Geometrical picture for preserving QFI.}
The QFI $F$ of $|\psi_\theta\rangle$ equals to the square of the Euclidean length of $2|\mathscr{D}_\theta\psi\rangle$ in terms of the Fubini-Study metric on the manifold of pure states, where $\ket{\mathscr{D}_\theta\psi_\theta}:=(\bm{1}-|\psi_\theta\rangle\langle\psi_\theta|)\frac{\mathrm{d}}{\mathrm{d}\theta}|\psi_\theta\rangle$ is the covariant derivative.
For pure states, the symmetric logarithmic derivative operator $L_\theta$ is a Hermitian representation of the covariant derivative as $L_\theta\ket{\psi_\theta}=2\ket{\mathscr{D}_\theta\psi_\theta}$, and the projective measurement with respect to eigenstates of $L_\theta$ attains the maximal Fisher information.
Under a set $\{E_j\}$ of errors, parametric state vectors are transformed into $E_j\ket{\psi_\theta}$, while covariant derivative vectors into $E_j\ket{\mathscr{D}_\theta\psi_\theta}$.
The QFI is preserved under the errors if and only if there exists a Hermitian operator $Q$ transforming all the erroneous state vectors to the corresponding erroneous covariant derivative vectors, i.e., $Q E_j\ket{\psi_\theta}=2E_j\ket{\mathscr{D}_\theta\psi_\theta}$ for all $j$; this operator $Q$ actually is the symmetric logarithmic derivative operator for all the noisy states under the errors (See the Methods). 
The projective measurement with respect to eigenstates of $Q$ attains the maximal Fisher information in noisy states.
}
\end{figure}

Theorem~\ref{thm:discretization} can be understood in a geometric way.
On the manifold of pure states, there is a Riemannian metric known as the Fubini-Study metric~\cite{Provost1980,Shapere1989}.
Along the parametric states $|\psi_\theta\rangle$, a line element is given by $\mathrm{d}s(|\psi_\theta\rangle,|\psi_{\theta+\mathrm{d}\theta}\rangle)=\lVert|\mathscr{D}_\theta\psi_\theta\rangle\rVert\mathrm{d}\theta$ with $\ket{\mathscr{D}_\theta\psi_\theta}:=(\bm{1}-|\psi_\theta\rangle\langle\psi_\theta|)\frac{\mathrm{d}}{\mathrm{d}\theta}|\psi_\theta\rangle$ being the covariant derivative vector.
This geometric metric is connected to parameter estimation theory in twofold:
on one hand, the QFI is given by $F(\ket{\psi_\theta}\bra{\psi_\theta}) = 4\lVert|\mathscr{D}_\theta\psi_\theta\rangle\rVert^2$;
on the other hand, the SLD operator $L_\theta$ is a Hermitian representation of the covariant derivative as $L_\theta\ket{\psi_\theta}=2\ket{\mathscr{D}_\theta\psi_\theta}$, and the projective measurement with respect to eigenstates of $L_\theta$ attains the maximal Fisher information.
In Fig.~\ref{fig:geo}, we show that how the conditions on the preservation of QFI can be intuitively understood in this geometric picture.

Theorem~\ref{thm:discretization} concerns the robustness of quantum parameter estimation with respect to noise.
It suggests that there might be a probe state which does not maximize the QFI under a specific noisy channel but ensures the QFI to be preserved and attained under an entire class of noisy channels. 
The necessary and sufficient condition (\ref{eq:NS}) on preserving the QFI against a set of errors needs the SLD operator $\cL_\theta$ for the corresponding noisy state $\cN(\rho_\theta)$. 
Our second main result is the testable conditions on preserving QFI without referring to the SLD operators of the noisy state.
These testable conditions are useful for finding good probe states for certain errors, or identifying those errors to which the QFI with certain probe state is immune.

\begin{theorem}\label{thm:testable}
The QFI of $\ket{\psi_\theta}$ is preserved under a set $\{E_j\}$ of errors, if and only if \textup{(i)}
\begin{equation}\label{eq:testable_condition}
    \braket{\psi_\theta|E_j^\dagger E_k L_\theta|\psi_\theta} = \braket{\psi_\theta|L_\theta E_j^\dagger E_k|\psi_\theta}
\end{equation}
for all $j$ and $k$, and \textup{(ii)} $\sum_j \alpha_j E_j \ket{\psi_\theta}=0$ for some  $\alpha_j\in\mathbb{C}$ infers $\sum_j \alpha_j E_j L_\theta\ket{\psi_\theta}=0$. For mixed state $\rho_\theta$ the QFI is preserved if and only if the above two conditions hold for all the states $|\psi_\theta\rangle$ in the range of $\rho_\theta$.
\end{theorem}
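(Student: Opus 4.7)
My plan is to deduce Theorem~\ref{thm:testable} from Theorem~\ref{thm:discretization}. For a pure probe $\rho_\theta=\ket{\psi_\theta}\bra{\psi_\theta}$ one has $\sqrt{\rho_\theta}=\rho_\theta$, so the condition of Theorem~\ref{thm:discretization} reduces to the existence of a Hermitian operator $\cL_\theta$ (to be identified with the SLD of $\cN(\rho_\theta)$) obeying $\cL_\theta E_j\ket{\psi_\theta}=E_j L_\theta\ket{\psi_\theta}$ for every $j$. The whole proof then consists in showing that such a $\cL_\theta$ exists if and only if conditions (i) and (ii) hold.

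For necessity, condition (ii) is immediate: act with $\cL_\theta$ on any relation $\sum_j\alpha_j E_j\ket{\psi_\theta}=0$. Condition (i) follows by Hermiticity of $\cL_\theta$: sandwich the defining identity between $\bra{\psi_\theta}E_k^\dagger$ on the left and $E_j\ket{\psi_\theta}$ on the right, then swap $j\leftrightarrow k$ and complex-conjugate; the two expressions obtained for $\bra{\psi_\theta}E_k^\dagger\cL_\theta E_j\ket{\psi_\theta}$ must agree, which yields precisely~(\ref{eq:testable_condition}) after relabeling.

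For sufficiency I build $\cL_\theta$ explicitly. On the subspace $\cR:=\mathrm{span}\{E_j\ket{\psi_\theta}\}$, define the linear map $E_j\ket{\psi_\theta}\mapsto E_j L_\theta\ket{\psi_\theta}$; condition (ii) is exactly the statement that this map is well-defined. Because the image vectors need not lie in $\cR$, I promote it to a Hermitian operator on $\mathcal{H}=\cR\oplus\cR^\perp$ in block form: the $\cR$-diagonal block is $B E_j\ket{\psi_\theta}=P_\cR E_j L_\theta\ket{\psi_\theta}$, the off-diagonal block $C:\cR\to\cR^\perp$ is $C E_j\ket{\psi_\theta}=P_{\cR^\perp}E_j L_\theta\ket{\psi_\theta}$ (with $C^\dagger$ acting on $\cR^\perp$), and the $\cR^\perp$-diagonal block is taken to be zero. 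The only non-automatic check is Hermiticity of $B$: $\bra{\psi_\theta}E_k^\dagger B E_j\ket{\psi_\theta}=\bra{\psi_\theta}E_k^\dagger E_j L_\theta\ket{\psi_\theta}$ must equal the complex conjugate of its $(j\leftrightarrow k)$ counterpart, which is precisely condition~(i). A direct computation then gives $\partial_\theta\cN(\rho_\theta)=\sum_j E_j\tfrac12\{L_\theta,\rho_\theta\}E_j^\dagger=\tfrac12\{\cL_\theta,\cN(\rho_\theta)\}$, identifying $\cL_\theta$ as a valid SLD of $\cN(\rho_\theta)$, and Theorem~\ref{thm:discretization} closes the argument.

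The mixed-state case runs in parallel, replacing $\ket{\psi_\theta}$ by an arbitrary vector $\ket{\phi}$ in the range of $\rho_\theta$ (which equals the range of $\sqrt{\rho_\theta}$), and using the polarization identity to lift the diagonal versions of (i) and (ii) to the cross-vector statements needed for consistency and Hermiticity on the full span $\mathrm{span}\{E_j\ket{\phi}:\ket{\phi}\in\mathrm{range}\,\rho_\theta\}$. The main obstacle is this extension step: a naive extend-by-zero fails because the images $E_j L_\theta\ket{\psi_\theta}$ leak out of $\cR$, and it is condition~(i) that precisely secures the consistency of the block-Hermitian extension.
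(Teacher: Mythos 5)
Your argument is correct in substance and follows the same skeleton as the paper's: reduce the preservation of QFI to the existence of a single Hermitian operator $Q$ intertwining the two families $\{E_j\ket{\psi_\theta}\}$ and $\{E_jL_\theta\ket{\psi_\theta}\}$ (the equivalence (a)$\Leftrightarrow$(b) of the Methods, which for a pure state collapses $Q E_j\sqrt{\rho_\theta}=E_jL_\theta\sqrt{\rho_\theta}$ to a vector equation), prove necessity of (i) and (ii) from Hermiticity and linearity of $Q$, and prove sufficiency by exhibiting $Q$. Your verification that any such $Q$ is automatically an SLD of $\cN(\rho_\theta)$ is exactly the paper's (b)$\Rightarrow$(c) step. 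Where you genuinely differ is in the construction underlying Lemma~\ref{lem:math}: the paper diagonalizes the Hermitian Gram-type matrix $g_{jk}=\braket{s_j|d_k}$ after passing to a maximal linearly independent subset, and writes $Q$ as $\sum_{c_j\neq0}c_j^{-1}\ket{\tilde d_j}\bra{\tilde d_j}$ plus cross terms built from the biorthogonal vectors $\ket{\tilde s_j^\perp}$; you instead decompose $\mathcal{H}=\cR\oplus\cR^\perp$ with $\cR=\mathrm{span}\{E_j\ket{\psi_\theta}\}$, let condition (ii) make the map well defined, put the leakage $P_{\cR^\perp}E_jL_\theta\ket{\psi_\theta}$ into the off-diagonal block, and use condition (i) only to certify Hermiticity of the $\cR$-diagonal block on a spanning set. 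Your block construction is cleaner: it needs neither the selection of an independent subset nor the reciprocal basis, and it isolates exactly which condition does which job. The paper's construction, in exchange, produces a concrete spectral form of $Q$ that it later reuses (e.g., in the proof of Theorem~\ref{thm:qec}).

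One soft spot: in the mixed-state case, polarization does lift the diagonal version of condition (i) to the cross terms $\braket{\phi|E_j^\dagger E_kL_\theta|\phi'}=\braket{\phi|L_\theta E_j^\dagger E_k|\phi'}$, because that identity is sesquilinear in $(\phi,\phi')$. But condition (ii) is not a statement about a sesquilinear form, so polarization does not upgrade ``$\sum_j\alpha_jE_j\ket{\phi}=0\Rightarrow\sum_j\alpha_jE_jL_\theta\ket{\phi}=0$ for each single $\phi$ in the range'' to the well-definedness you actually need on the full span, namely that any relation $\sum_{j,l}\alpha_{jl}E_j\ket{\phi_l}=0$ mixing different range vectors is inherited by the images. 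The paper handles this by applying Lemma~\ref{lem:math} with the composite index $(j,l)$, i.e., it implicitly reads condition (ii) in that composite sense; you should state the same reading explicitly rather than attribute it to polarization. With that adjustment the mixed-state case goes through exactly as you describe.
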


The proof is sketched in the Methods (a full version is deferred to the Supplementary Note 1).
For a unitarily parameterized family of pure states, $\ket{\psi_\theta}=\exp(-i\theta H)\ket{\psi}$, by noting $L_\theta\ket{\psi_\theta} = -2i\Delta H \ket{\psi_\theta}$ with $\Delta H:=H-\braket{\psi|H|\psi}$, we simplify the two testable conditions into (i) 
\begin{equation}
	\langle\psi_\theta|\{E_j^\dagger E_k,\Delta H\}|\psi_\theta\rangle=0
\end{equation}
for all $j$ and $k$ and (ii) $\sum_j \alpha_j E_j \ket{\psi_\theta}=0$ for some  $\alpha_j\in\mathbb{C}$ infers $\sum_j \alpha_j E_j H\ket{\psi_\theta}=0$.
The testable conditions describe the minimal requirements for the robustness of a parameter estimation scheme against noise, and are looser than that of QEC for the parametric family of states.
Recall that a set $\{E_j\}$ of errors is correctable for a code space if and only if 
\begin{equation}\label{eq:qec}
    \langle\phi|E_j^\dagger E_k|\varphi\rangle=0, \quad
    \langle\phi|E_j^\dagger E_k|\phi\rangle=\langle\varphi|E_j^\dagger E_k|\varphi\rangle 
\end{equation}
are satisfied for all $j,k$ and all pairs of orthonormal state vectors $|\phi\rangle$ and $|\varphi\rangle$ in the code space~\cite{Knill1997}.
Let us choose $|\psi\rangle$ and $H$ to be in a standard quantum error-correcting code so that $|\phi\rangle\propto|\psi_\theta\rangle$ and $|\varphi\rangle\propto L_\theta|\psi_\theta\rangle$ are two orthonormal states in the coding subspace. 
In such case, the first and second testable conditions are implied by the first and second equalities in equation (\ref{eq:qec}) respectively.

% {\bf Stabilizer formalism for QFI.} 
Henceforth, we simply say that $\rho_\theta$ is a robust metrological scheme with respect to a set $\{E_j\}$ of errors if the QFI of $\rho_\theta$ is preserved under $\{E_j\}$.
We show below that concrete robust metrological schemes can be easily constructed based on the stabilizer formalism~\cite{Gottesman1996}.
A stabilizer code $\cC(S)$ is the joint $+1$ eigenspace of the stabilizer group $S$, which is an Abelian subgroup of the $n$-qubit Pauli group, i.e., $S_i\ket{\psi} = \ket{\psi}$ for all $S_i\in S$ and all $\ket{\psi}\in\cC(S)$. 
A set $\{E_j\}$ of Pauli errors---$E_j$ are also elements of the $n$-qubit Pauli group---are correctable for this stabilizer code, if each  $E_j^\dagger E_k$ is either in the stabilizer group, or detectable, i.e., anticommutes with at least one element of the stabilizer group~\cite{Gottesman1996}.

\begin{theorem}\label{thm:qec}
In a metrological scheme $|\psi_\theta\rangle=e^{-i\theta H}|\psi\rangle$ where the probe state $|\psi\rangle$ is taken from the coding subspace of a stabilizer code $\cC(S)$ capable of correcting errors $\{E_j\}$ and $[H,S]=0$, the QFI is also immune to the errors $\{E_j\bar X\}$, where $\bar X$ is a Pauli error that commutes with $S$ while anticommutes with $\Delta H_\mathrm{eff}:= PHP-\langle\psi|H|\psi\rangle$ with $P$ being the projection onto $\cC(S)$. If the coding subspace is two-dimensional then the optimal measurement is the joint measurement of $S$ and $\bar X$.
\end{theorem}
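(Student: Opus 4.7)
The plan is to apply Theorem~\ref{thm:testable} to the enlarged error set $\{E_j \bar X\}$ acting on the probe state $|\psi_\theta\rangle$. Two preliminary observations streamline everything: since $[H,S]=0$ and $[\bar X,S]=0$, the vectors $|\psi_\theta\rangle$, $|\psi_\theta'\rangle:=\bar X|\psi_\theta\rangle$, $H|\psi_\theta\rangle$ and $H|\psi_\theta'\rangle$ all lie in $\cC(S)$; and on $\cC(S)$ the generator acts as $H=c\,\bm{1}+\Delta H_{\mathrm{eff}}$ with $c=\langle\psi|H|\psi\rangle$, so $\Delta H|\phi\rangle=\Delta H_{\mathrm{eff}}|\phi\rangle$ for every $|\phi\rangle\in\cC(S)$, and in particular $\langle\psi_\theta|\Delta H_{\mathrm{eff}}|\psi_\theta\rangle=0$.

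For condition~(i) I would expand $(E_j\bar X)^\dagger(E_k\bar X)=\bar X E_j^\dagger E_k\bar X$ and use $\{\bar X,\Delta H_{\mathrm{eff}}\}=0$ to slide the outer $\bar X$'s past $\Delta H$ acting on $|\psi_\theta\rangle$, rewriting the required matrix element as $-\langle\psi_\theta'|\{E_j^\dagger E_k,\Delta H_{\mathrm{eff}}\}|\psi_\theta'\rangle$. I would then split on the Pauli--stabilizer correctability of $\{E_j\}$: if $E_j^\dagger E_k\in S$, the expression reduces to $2\langle\psi_\theta'|\Delta H_{\mathrm{eff}}|\psi_\theta'\rangle$, which vanishes through $\bar X\Delta H_{\mathrm{eff}}\bar X=-\Delta H_{\mathrm{eff}}$ together with the zero-mean property above; if $E_j^\dagger E_k$ anticommutes with some stabilizer element, it sends $\cC(S)$ orthogonally off the code space, so both halves of the anticommutator annihilate against $\langle\psi_\theta'|$. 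For condition~(ii) I would exploit the structural fact that for Pauli $E_j$ and $|\phi\rangle\in\cC(S)$, the vector $E_j|\phi\rangle$ depends only on the syndrome class of $E_j$ and different syndromes yield mutually orthogonal outputs, so $\sum_j\alpha_jE_j|\phi\rangle=0$ is equivalent to per-syndrome-class constraints $\sum_{j\in I_s}\alpha_j=0$ that are independent of $|\phi\rangle$. Combined with $\bar X H|\psi_\theta\rangle=(2c-H)|\psi_\theta'\rangle$ (derived from the same anticommutation) and $H|\psi_\theta'\rangle\in\cC(S)$, the required implication follows, and Theorem~\ref{thm:testable} delivers QFI preservation under $\{E_j\bar X\}$.

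For the optimal measurement in the two-dimensional case, the noisy state is supported on the mutually orthogonal syndrome subspaces $E_{j(s)}\cC(S)$; measuring $S$ nondestructively identifies which representative $E_{j(s)}$ acted and collapses the system onto the pure state $E_{j(s)}\bar X|\psi_\theta\rangle$. Because $\bar X$ commutes or anticommutes with each Pauli $E_{j(s)}$, the outcome probabilities of a subsequent $\bar X$ measurement coincide with those of $\bar X$ measured directly on $|\psi_\theta\rangle$. I would close the argument by identifying the SLD $\cL_\theta$ supplied by Theorem~\ref{thm:discretization} with a multiple of $\bar X$ on each syndrome subspace, using the anticommuting logical-qubit structure on the $2$D code so that the $\bar X$- and $\cL_\theta$-eigenbases coincide. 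I expect this final identification to be the main obstacle: it requires a careful geometric match between the SLD direction and the $\bar X$ axis on the logical Bloch sphere, which one expects to hold precisely when the probe $|\psi\rangle$ is chosen to maximize the logical QFI; everything else reduces to syndrome bookkeeping and the anticommutation algebra above.
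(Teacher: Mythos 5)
Your overall strategy -- verify the two testable conditions of Theorem~\ref{thm:testable} using the stabilizer structure, the anticommutation $\{\bar X,\Delta H_{\mathrm{eff}}\}=0$, and syndrome orthogonality -- is the paper's strategy. But there is a substantive gap in scope: you only check condition (i) for pairs drawn from the new set, i.e.\ for $(E_j\bar X)^\dagger(E_k\bar X)=\bar X E_j^\dagger E_k\bar X$. The paper checks all pairs $\bar X^{\gamma}E_k^\dagger E_j\bar X^{\tau}$ with $\gamma,\tau\in\{0,1\}$, i.e.\ it proves preservation under the \emph{combined} set $\{E_j\}\cup\{E_j\bar X\}$. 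This is not cosmetic: by Theorem~\ref{thm:discretization} the protected class of channels is spanned by linear combinations of the listed Kraus elements, and the paper's application (arbitrary $t$-qubit errors on $2t+1$ qubits) needs linear combinations that mix $Z_\alpha$ with $Z_\alpha X_I$, hence the combined set. Moreover the cross case $\gamma\neq\tau$ with $E_k^\dagger E_j\in S$ is precisely where the hypothesis $\{\bar X,\Delta H_{\mathrm{eff}}\}=0$ is indispensable (it forces $\langle\psi_\theta|L_\theta\bar X|\psi_\theta\rangle=\langle\psi_\theta|\bar XL_\theta|\psi_\theta\rangle$); your diagonal case could almost be dispensed with by conjugation. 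Condition (ii) must likewise be checked for dependences mixing $E_j|\psi_\theta\rangle$ with $E_k\bar X|\psi_\theta\rangle$: since $\bar X$ commutes with $S$, these two vectors live in the \emph{same} syndrome sector, so your "per-syndrome-class constraints independent of $|\phi\rangle$" argument needs the extra step of splitting each sector into the two linearly independent directions $E_j(1\pm\bar X)|\psi_\theta\rangle$, which is exactly how the paper handles it.

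The second gap is the optimal-measurement claim, which you yourself flag as unresolved. Your proposed resolution is off target: no optimality of the probe state is needed, and the post-syndrome conditional state is generally a \emph{mixture} of $E_j|\psi_\theta\rangle$ and $E_j\bar X|\psi_\theta\rangle$, not the pure state you describe. The paper closes this by explicitly writing down
\begin{equation*}
Q=\sum_{j\in J}E_j\,\frac{\{\bar X,\;L_\theta\ket{\psi_\theta}\bra{\psi_\theta}L_\theta\}}{\braket{\psi_\theta|L_\theta\bar X|\psi_\theta}}\,E_j^\dagger ,
\end{equation*}
verifying $QE_j\bar X^{\tau}\ket{\psi_\theta}=E_j\bar X^{\tau}L_\theta\ket{\psi_\theta}$ so that $Q$ is a common SLD for all the noisy states, and then observing that $Q$ commutes with every element of $S$ and with $\bar X$; since on a two-dimensional code these form a complete set of commuting observables, their joint measurement is equivalent to measuring in the eigenbasis of $Q$. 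That construction is the missing ingredient, and it follows from the same anticommutation algebra you already have in hand rather than from any Bloch-sphere optimization.
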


The proof is sketched in the Methods (see Supplementary Note 1 for a full proof). 
Theorem~\ref{thm:qec} can be easily used to identify the QFI-immune error set for a given scheme.
As an example, we consider a system composed of $n=2t+1$ qubits that are labeled with the index set $I = \{1,2,\cdots,n\}$.
Let us denote  $X_i$, $Y_i$ and $Z_i$ the tensor products of the Pauli matrices $X$, $Y$, and $Z$ on the $i$th qubit and identity operators on other qubits, respectively, and $O_\alpha = \prod_{i\in\alpha} O_i$ with $\alpha\subseteq I$ for $O=X,Y,$ and $Z$.
Let  $\cC$ be the 2-dimensional subspace stabilized by $\{X_\alpha\mid|\alpha|=\mathrm{even}\}$, which is exactly the coding subspace of a stabilizer code capable of correcting all $t$-qubit phase-flip errors $\{Z_\alpha\}$ with $|\alpha|\le t$. 
For any state $|\psi\rangle\in\cC$ such that $\langle\psi|Z_I|\psi\rangle=0$, the metrological scheme $\ket{\psi_\theta}=\exp(-i\theta Z_I)|\psi\rangle$ preserves QFI against all $t$-qubit phase flip errors plus errors of type $\{Z_\alpha X_I\}$, which include essentially arbitrary error on no more than $t$ qubits, i.e., those whose error operators have nontrivial effects on no more than $t$ qubits. 
That is to say, in terms of QFI, the $t$-qubit phase-flip codes can be used to protect a metrological scheme from arbitrary $t$-qubit errors occurring after the signal sensing. 
In comparison, at least five physical qubits are required in a standard quantum error-correcting code to correct arbitrary single qubit error, while our scheme requires only three physical qubits.
This is one of the advantages brought in by considering the preservation of QFI instead of the protection of quantum states. 
The maximal Fisher information is attained by the joint measurement of the stabilizers of $\cC$ and $X_I$, i.e., all the observables $\{X_j|j\in I\}$ without any recovery operation.

\vspace{8pt}
{\bf Entanglement-enhanced metrology.}
Beating the standard quantum limit by quantum entanglement is one of the most fascinating aspects of the quantum-enhanced metrology~\cite{Giovannetti2006,Giovannetti2011,Giovannetti2004}. 
A canonical example is utilizing the $m$-qubit Greenberger-Horne-Zeilinger (GHZ) state as the probe state for the parallel samplings of a unitary sensing transformation, wherein QFI scales quadratically with $m$---the Heisenberg scaling.
Replacing the noisy individual systems in the entangled state by logical ones makes the resulting scheme robust to correctable errors~\cite{Dur2014}.
Here, we show that the entanglement, besides helps to beat the standard quantum limit, also supplies a higher level of quantum error correcting code. 
Let us consider a metrological scheme whose parametric family of states read 
\begin{equation}\label{eq:HeisenbergLimit}
\ket{\psi_\theta}=\exp(-i\theta\sum_{i=1}^m\bar Z^{[i]})(\ket{\bar0}^{\otimes m}+\ket{\bar1}^{\otimes m})/\sqrt2,	
\end{equation}
where $\bar Z^{[i]}=\prod_{j=1}^n Z^{[i]}_j$ is the logical Pauli $Z$ operators on the $i$-th block, and $\ket{\bar0}$ and $\ket{\bar1}$ are the logical basis.
Let $S^{[i]} = \{X_\alpha^{[i]}\mid|\alpha|\text{ is even}\}$ be the stabilizer group of the $n$-qubit phase-flip code for the $i$-th qubit in the original scheme.
Further, assume that $m$ and $n=2t+1$ are odd.
This scheme is robust against to three kinds of errors occurring after the signal sensing.
First, less than or equal to $t$ phase-flip errors are correctable by phase-flip code in each block. 
Second, the states given by equation (\ref{eq:HeisenbergLimit}) are in a subspace stabilized by $\bar Z^{[i]}\bar Z^{[i+1]}$, which is a bit-flip code capable of correcting no more than $(m-1)/2$ logical bit-flip errors in the blocks. 
Since every single-qubit bit-flip error on the codewords of the phase-flip code in each block is equivalent to a logical bit-flip error on the block, less than or equal to $(m-1)/2$ physical bit-flip errors are correctable. 
Third, for more than $(m-1)/2$ bit-flip errors, the parametric family of states cannot be recovered but the QFI is still preserved.
% Considering the structure of the GHZ state, all parametric family of states are in a standard $t$-qubit-error correcting code with one higher-level logical qubit, generalizing Shor's construction. 
% Obviously, this metrological scheme preserves the QFI against arbitrary $t$-qubit errors.
% Additionally the QFI is preserved also against a set of collective errors $\{E_j\bar X\}$ with $\bar X=\prod_{i=1}^m \prod_{j=1}^n X^{[i]}_j$ which anti commutes with $H$.
Moreover, the joint measurement of all the stabilizers $X^{[i]}_j X^{[i]}_{j+1}$ and $\bar Z^{[i]}\bar Z^{[i+1]}$ of the stabilizer code together with $\prod_{ij}X_j^{[i]}$ attain the maximal Fisher information. 
In ref.~\cite{Dur2014}, D\"ur {\it et al}. proposed the same metrological scheme as equation (\ref{eq:HeisenbergLimit}), but only the protective capability of the error-correcting codes in each block was explored; we note that the GHZ state itself provides a higher-level bit-flip code and find some uncorrectable errors that are harmless to QFI.

\vspace{8pt}
{\bf Noise during the signal accumulation.} 
The above results still hold for the noise during the signal accumulation if the generator of the noise commutes with that of the signal accumulation.
A simple case of this kind is that the error operators $E_j$ commute with the generating operator $H$ of the signal accumulation;
but in such a case, our method is equivalent to the protection of quantum states~\cite{Dur2014} as the additional errors $\{E_j\bar X\}$ given by theorem~3 do not commute with $H$.
However, the errors that can be deferred after the signal accumulation are not restricted in this case.
Therefore, our method still has potential advantages over the protection of quantum states for the deferrable noise.
Let us consider a quantum system that evolves as 
\begin{equation}\label{eq:evolution}
\frac{\mathrm{d}\rho(t)}{\mathrm{d}t} = \omega\hat C\rho(t) + \hat D\rho(t),
\end{equation}
where $\omega$ is the signal parameter to be sensed and estimated, the superoperators $\hat C$ and $\hat D$ are the generators of the signal accumulation and the noise respectively. 
Usually, $\hat C\rho(t)=-i[H,\rho(t)]$ so that for the noiseless case  the signal accumulation is unitary.
The noise can be deferred after the signal accumulation if the superoperator $\hat C$ commutes with $\hat D$, so that the total evolution is $\exp(t\hat D)\exp(\omega t\hat C)$. 
This does not imply that the Kraus operators for $\exp(t\hat D)$ commute with the operator $H$.
For example, let us consider the phase accumulation of an atom under spontaneous emission, where $H=Z$ and $\hat D\rho=\gamma(\sigma_-\rho\sigma_-^\dagger- \sigma_-^\dagger\sigma_-\rho/2 - \rho\sigma_-^\dagger\sigma_-/2)$ with $\sigma_-=(X-iY)/2$.
The Kraus operators for $\exp(t\hat D)$ are given by $E_1=\frac{\sqrt\eta+1}{2}I+\frac{\sqrt\eta-1}{2}Z$ and $E_2=\sqrt{1-\eta}\sigma_-$ with $\eta=e^{-\gamma t}$.
It can be shown that $\hat D$ commutes with $\hat C$, nevertheless $[E_2,H]\neq0$.

\vspace{8pt}
{\bf Physical example.}
Here, we give a physical example to quantitatively analyze the performance of robust metrological schemes with the QFI-protection. 
Let us consider the frequency estimation of atoms with uncorrelated parallel and transverse dephasing. 
The atoms are modeled by qubits whose evolution is still in the form of equation (\ref{eq:evolution}). The noise is described by $\hat D=(\gamma_\mathrm{x}\hat D_\mathrm{x}+\gamma_\mathrm{z}\hat D_\mathrm{z})/2$, where $\gamma_\mathrm{x}$ and $\gamma_\mathrm{z}$ are the strengths of the noise, $\hat D_\mathrm{x}\rho=\sum_{i=1}^N (X_i\rho X_i-\rho)$, and $\hat D_\mathrm{z}\rho=\sum_{i=1}^N (Z_i\rho Z_i-\rho)$ with $N$ being the total number of qubits.
The qubits are divided into $m=\lfloor N/n\rfloor$ blocks.
The probe state is the logical GHZ states with respect to $n$-qubit phase-flip code in each block, and the generating operator of the signal accumulation is given by $H=\frac12\sum_{i=1}^m \bar Z^{[i]}$.
This scheme was first proposed by D\"ur {\it et al.}~\cite{Dur2014} to subdue only the parallel dephasing.
When $n=1$, this scheme is reduced to the ordinary one of using the raw GHZ probe state and independent signal accumulation~\cite{Giovannetti2004}.
Note that $\hat D_\mathrm{z}$ commutes with both $\hat C$ and $\hat D_\mathrm{x}$.
For short measurement times such that $N\gamma_\mathrm{x}^2 t^2\ll1$ and $N\omega^2 t^2\ll1$, by approximation of Trotter expansion, it can be shown that $\rho(t)\approx\exp(\gamma_\mathrm{z} t\hat D_\mathrm{z}/2)\exp(\gamma_\mathrm{x} t\hat D_\mathrm{x}/2)\exp(\omega t\hat C)\rho(0)$. 
Note that $\exp(\gamma_\mathrm{a} t\hat D_\mathrm{a}/2)=\prod_{i=1}^N\hat V_{\mathrm{a},i}$ for $\mathrm{a}=\mathrm{x,\,z}$ with $\hat V_{\mathrm{x},i}\colon\rho\mapsto(1-p_\mathrm{x})\rho+p_\mathrm{x}X_i\rho X_i$ and $\hat V_{\mathrm{z},i}\colon\rho\mapsto(1-p_\mathrm{z})\rho+p_\mathrm{z}Z_i\rho Z_i$, where $p_\mathrm{a}=(1-e^{-\gamma_\mathrm{a}t})/2$.
In the case of the raw GHZ state scenario (i.e., $n=1$), we obtain the exact result of the QFI about $\omega$ as 
\begin{multline}\label{eq:QFI}
    F[\rho(t)] = N^2 t^2 - N^2 t^2\sum_{k=0}^{(N-1)/2}\binom{N}{k} \\
    \frac{2a_k(1-x_k^2)(1-y_k^2)}
    {2-x_k^2-y_k^2+(y_k^2-x_k^2)\cos2N\omega t},
\end{multline}
where 
\begin{align}\label{eq:parameters}
a_k =& p_\mathrm{x}^k(1-p_\mathrm{x})^{N-k}+p_\mathrm{x}^{N-k}(1-p_\mathrm{x})^k, \nonumber\\
x_k =& (1-2p_\mathrm{z})^N,\nonumber\\
y_k =& (1-2p_\mathrm{z})^N \frac{(1-p_\mathrm{x})^{N-2k}-p_\mathrm{x}^{N-2k}}{(1-p_\mathrm{x})^{N-2k}+p_\mathrm{x}^{N-2k}}
\end{align}
(See Supplementary Note 2 for the detailed calculations).
When $\gamma_\mathrm{x}=0$, we have $F[\rho(t)]=e^{-2N\gamma_\mathrm{z} t}N^2t^2$, which is consistent with the result of ref.~\cite{Huelga1997}.
When $\gamma_\mathrm{z}=0$, we have $F[\rho(t)]=N^2t^2$, which is consistent with theorem~3 as the QFI is totally preserved if there are only bit-flip errors.
Note that this is not explicit if we only consider the protection of quantum states.
Moreover, as long as $p_\mathrm{z}$ is small such that $(1-2p_\mathrm{z})^N$ is close to $1$, the QFI of the noisy states is close to $N^2 t^2$ and insensitive to $p_\mathrm{x}$.
Therefore, the phase-flip code is enough to protect the QFI against the parallel and transversal dephasing in such a situation.

\begin{figure}[tb]
	\begin{center}
		\includegraphics[width=85mm]{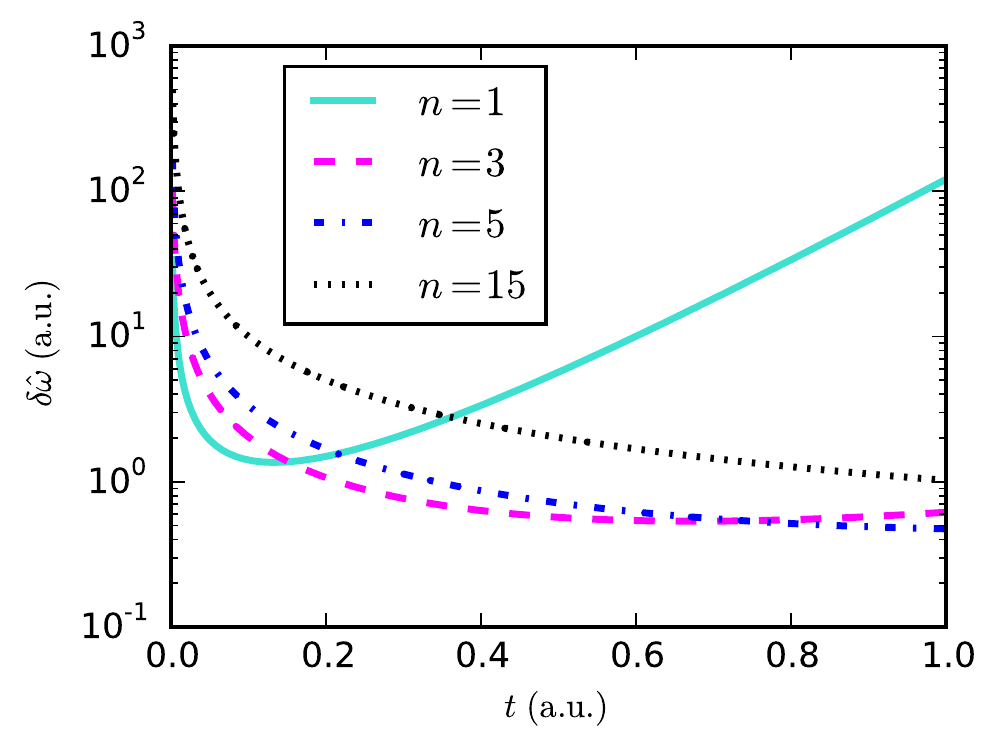}
	\end{center}
	\caption{\label{fig:QFI_time}
	{\bf Quantum Cram\'er-Rao bounds of a single shot measurement.} 
	The estimation error $\delta\hat{\omega}:=\langle(\hat\omega/|\frac{\mathrm{d}\braket{\hat\omega}}{\mathrm{d}\omega}|-\omega)^2\rangle^{1/2}$, represented by the $y$-axis, is bounded from below by the quantum Cram\'er-Rao bounds (the curves in the figure). 
	The $x$-axis represents the time of the signal accumulation process.
	Here, the total number of qubits is $N=15$, which is divided into blocks of size $n=1$, $3$, $5$, and $15$.
	Each block is protected by an $n$-qubit phase-flip code.
	The probe state is the logical GHZ state, and the sensing transformation is the independent logical phase accumulation.
	The figure is plotted at $\omega=0.001$, $\gamma_\mathrm{x}=0.001$, and  $\gamma_\mathrm{z}=0.5$.
}
\end{figure}

\begin{figure}[tb]
	\begin{center}
		\includegraphics[width=85mm]{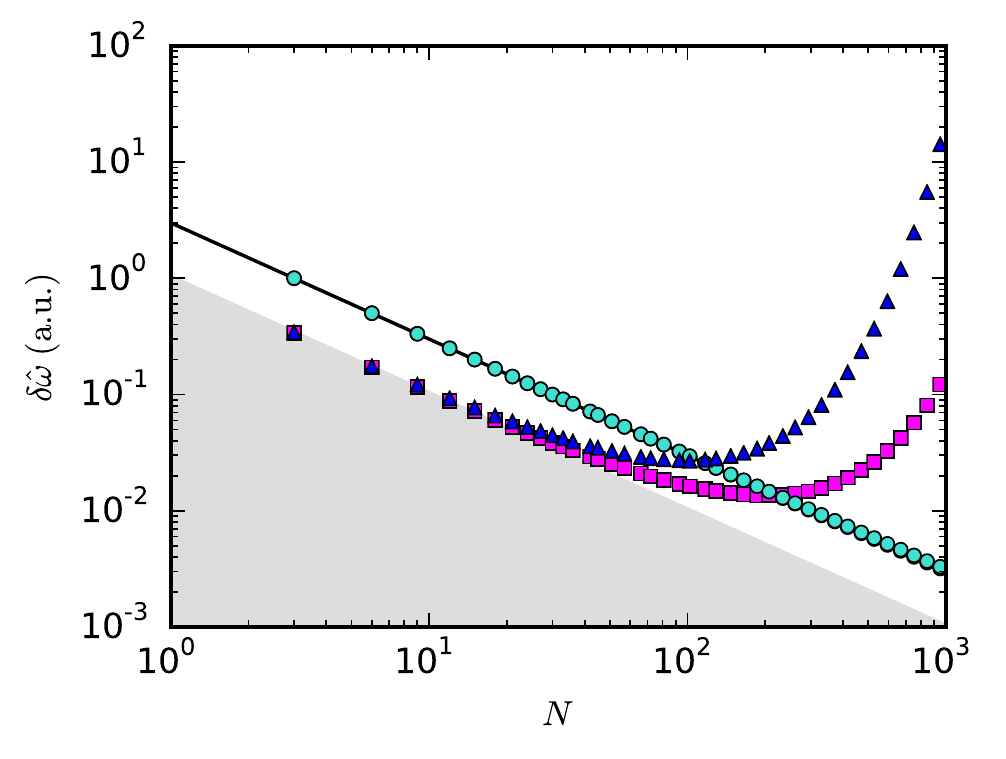}
	\end{center}
	\caption{\label{fig:scaling}
	{\bf Quantum Cram\'er-Rao bounds as a function of the total number of qubits.}
	The gray region denotes where the uncertainty is below the Heisenberg limit $1/N$. 
	In the scenario of using the raw GHZ states and the raw phase accumulation, the quantum Cram\'er-Rao bounds cannot follow the Heisenberg scale when $N$ is large, see the magenta square markers for the first kind of noise with $\gamma_\mathrm{x} = 5\times10^{-4}$ and $\gamma_\mathrm{z} = 5\times10^{-3}$, and the blue triangle markers for the second kind of  noise with $\gamma_\mathrm{x} = 10^{-3}$ and $\gamma_\mathrm{z} = 10^{-2}$. 
	In the scenario of using the logical GHZ states and the logical phase accumulation with the three-qubit phase-flip code in each block, the quantum Cram\'er-Rao bounds for the first and second kinds of noise are so close that they are visually indistinguishable in the figure (denoted by turquoise circle marker), which reflects the robustness of the scheme. 
	Furthermore, in the logical scenario, the quantum Cram\'er-Rao bounds follow well the Heisenberg scale of $3/N$ (black solid line).
	The figure is plotted at $\omega = 0.001$ and $t = 1$.
	}
\end{figure}

The logical GHZ state scenario, where $n$ is an odd number greater than one, can be recast into the raw GHZ state scenario by noting that only logical errors remain after the error correction in each block. 
Since less than or equal to $(n-1)/2$ phase-flip errors are corrected, the probability of the logical phase-flip error is suppressed to $\bar p_\mathrm{z}=\sum_{k=0}^{(n-1)/2}\binom{n}{k}p_\mathrm{z}^{n-k}(1-p_\mathrm{z})^k$, see ref.~\cite{Dur2014}.
Note that each single-qubit bit-flip error is equivalent to a logical bit-flip error on that block.
The probability of the logical bit-flip error is $\bar p_\mathrm{x}=[1-(1-2p_\mathrm{x})^n]/2$. Then, the QFI of the noisy states in the logical GHZ state scenario can also be given through equations~(\ref{eq:QFI}) and (\ref{eq:parameters}) via substituting $p_\mathrm{x}$, $p_\mathrm{z}$, and $N$ by $\bar p_\mathrm{x}$, $\bar p_\mathrm{z}$, and $\lfloor N/n \rfloor$, respectively.
Note that $\bar p_\mathrm{x}>p_\mathrm{x}$ for $0<p_\mathrm{x}<1/2$, which means that the bit-flip errors are amplified.
However, as long as the phase-flip errors are sufficiently suppressed, the bit-flip errors in the GHZ state scenario are almost harmless to QFI.

For instance, it is shown in Fig.~\ref{fig:QFI_time} that in the raw GHZ state scenario the quantum Cram\'er-Rao bound rises when the destructive effect of the noise on the QFI is dominant over the gain of the signal accumulation, whereas the use of the error correction can suppress the phase-flip errors so that the quantum states can gain more QFI by the signal accumulation process for a long time even in the presence of the bit-flip errors.
In Fig.~\ref{fig:scaling}, we show that using the phase-flip QEC code the quantum Cram\'er-Rao bound has a Heisenberg scaling with a constant factor, i.e., $\delta\hat\omega\sim n/N$. Note that here the size $n$ of the block is a fixed small integer, e.g., $n=3$ for the three-qubit phase-flip code in each block. Therefore, when $N$ becomes large, $n/N$ will be much smaller than the standard quantum limit $1/\sqrt{N}$.
In this comparison, the resources are measured by the total number of the physical qubits used. 
However, it should be noted that the signal accumulation processes are different in the two scenarios.

\vspace{8pt}
\noindent\textbf{Discussion}\\
The scenario considered in this work assumes a model where the noise occurs after the signal sensing. 
This coincides with analog communication over noisy quantum channels~\cite{Personick1971}, where analog signals are encoded in quantum states, transmitted over a noisy quantum channel, and estimated by the receivers.
The assumed model is also applicable for the sensing-stage noise whose generator is commuting with that of the signal sensing. 
Some known examples belonging to this class include the depolarization, the dephasing, and the spontaneous emission~\cite{Demkowicz-Dobrzanski2012} in the two-level systems with the signal sensing generated by the Pauli $Z$ matrix, and the photon loss~\cite{Dorner2009} and the phase diffusion~\cite{Genoni2011} in the optical fields with the signal sensing generated by the photon number operator.
For realistic instruments where the noise may be very complicated, our method can be applied together with other technologies such as dynamical decoupling~\cite{Arrad2014}.

In summary, we have established a theory of error correction designed for quantum metrology in the context of quantum estimation theory. 
The purpose of our specialized QEC is to preserve the QFI, which determines the best precision of estimating the value of a parameter, instead of the quantum states themselves.  
We have given testable conditions to identify the errors to which the QFI is immune, and constructed the optimal measurements in noisy states for the best estimation precision.
While in the standard QEC any states, mixed or pure, in the coding subspace can be used in a metrological scheme, our conditions do not generally give rise to a subspace, instead only a special set of states that can serve our purpose.
Our method can be readily applied for some parameter estimation problems, especially for those in the stabilizer formalism. 
Comparing with the standard stabilizer codes, our theory has the advantages of, firstly,  being capable of preserving QFI against more errors using the same amount of resources and, secondly, sparing the recovery operations.

\vspace{8pt}
\noindent\textbf{Methods}\\
\textbf{ Condition for preserving QFI}.
Let us start with a crucial observation on the loss of QFI after a known noisy channel. 
For a given channel $\cN$ with Kraus operators $\{E_j\}$, we denote  $V$ a unitary representation on the system plus an ancilla in the state $\rho_a$ such that 
\begin{equation}
	\cN(\rho_\theta)=\sum_jE_j\rho_\theta E_j^\dagger=\tr_a\big[V(\rho_\theta\otimes\rho_a)V^\dagger\big],
\end{equation}
where $\tr_a$ is the partial trace over the ancilla. 
For the sake of rigorousness, we assume that we always have bounded SLD operators henceforth, i.e., SLD operators $L_\theta$ and $\cL_\theta$ for states $\varrho_\theta$ and $\cN(\rho_\theta)$, respectively, exist and are finite. 
The loss of QFI can be expressed as (see Supplementary Note 1)
\begin{equation}\label{eq:rms} 	
\begin{aligned}
\Delta_\mathrm{F}(\rho_\theta,\cN)&:=F(\rho_\theta)-F(\cN(\rho_\theta))\\
&=\min_{Q\in\mathrm{Herm}(\cN)}\epsilon(\rho_\theta,\cN,Q)\\
&=\sum_j\lVert(\cL_\theta E_j-E_j L_\theta)\sqrt{\rho_\theta}\rVert_\mathrm{HS}^2,
\end{aligned}
\end{equation}
where $\mathrm{Herm}(\cN)$ denotes the set of all bounded Hermitian operators on the Hilbert space associated with the output of $\cN$, $\lVert O\rVert_\mathrm{HS}:=\sqrt{\tr(O^\dagger O)}$ is the Hilbert-Schmidt norm of the operator $O$, and
\begin{equation}
    \epsilon(\rho_\theta,\cN,Q):=\langle\big(V^\dagger(Q\otimes\bm{1})V-L_\theta\otimes\bm{1}\big)^2\rangle_{\rho_\theta\otimes\rho_a}
\end{equation}
is exactly the square of the measurement error used by Ozawa to derive his error-disturbance uncertainty relation~\cite{Ozawa2003}. From equation (\ref{eq:rms}), we see that the loss of QFI can be understood as the minimal measurement error of measuring a Hermitian operator $Q$ after the given noisy channel compared with measuring $L_\theta$ before the noisy channel. 
We note that due to equation (\ref{eq:rms}), the following statements are equivalent:
\begin{enumerate}
\item[(a)] $\Delta_\mathrm{F}(\rho_\theta,\cN)=0$.
\item[(b)] There exists a Hermitian operator $Q$ such that $QE_j\sqrt{\rho_\theta}=E_jL_\theta\sqrt{\rho_\theta}$ is satisfied for all $j$.
\item[(c)] $\cL_\theta E_j\sqrt{\rho_\theta}=E_jL_\theta\sqrt{\rho_\theta}$ is satisfied for all $j$.
\end{enumerate}

\textbf{Sketch of the proof}.
The necessary and sufficient condition (\ref{eq:NS}) for the preservation of QFI under a known noisy channel follows from the equivalence between (a) and (c). 
Theorem~\ref{thm:discretization} is a consequence of equation~(\ref{eq:NS}).
The geometric picture illustrated in Fig.~\ref{fig:geo} is due to the equivalence between (a) and (b).
Theorem~\ref{thm:testable} is implied by the equivalence between (a) and (b) together with the following lemma, for which we give a constructive proof in Supplementary Note 1.

\begin{lemma}\label{lem:math}
For two indexed families of vectors $\ket{s_j}$ and $\ket{d_j}$, there exists a Hermitian operator $Q$ such that $Q \ket{s_j} = \ket{d_j}$ for all $j$, if and only if \textup{(i)} $\langle s_j|d_k\rangle=\langle d_j|s_k\rangle$ 
for all $j$ and $k$ and \textup{(ii)} for all $\alpha_j$ such that $\sum_j\alpha_j\ket{s_j}=0$, $\sum_j\alpha_j\ket{d_j}=0$ must be satisfied.
\end{lemma}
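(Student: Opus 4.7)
The plan is to prove necessity by direct computation and to establish sufficiency by constructing $Q$ in two steps: first define a linear map $Q_0$ on $V_s := \mathrm{span}\{\ket{s_j}\}$ that sends $\ket{s_j}$ to $\ket{d_j}$, then symmetrize it into a Hermitian operator on the whole Hilbert space. Necessity is immediate: if $Q = Q^\dagger$ with $Q\ket{s_j} = \ket{d_j}$, then $\braket{s_j|d_k} = \bra{s_j}Q\ket{s_k} = \overline{\bra{s_k}Q\ket{s_j}} = \braket{d_j|s_k}$ gives (i), while (ii) is immediate from the linearity of $Q$.

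For sufficiency, condition (ii) is precisely what makes the prescription $\sum_j \alpha_j \ket{s_j} \mapsto \sum_j \alpha_j \ket{d_j}$ into a well-defined linear map $Q_0 : V_s \to \mathcal{H}$, because two coefficient families representing the same vector in $V_s$ differ by a null combination on the $\ket{s_j}$ side, which by (ii) is also null on the $\ket{d_j}$ side. Extending $Q_0$ by zero on $V_s^\perp$ (so that $Q_0 = Q_0 P_s$, where $P_s$ is the orthogonal projection onto $V_s$), the candidate Hermitian extension is
\[
Q := Q_0 + Q_0^\dagger - P_s Q_0 P_s,
\]
a symmetrization designed to reconcile (i) and (ii) in a single stroke.

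Verification splits into two checks. For Hermiticity, $Q - Q^\dagger = P_s(Q_0^\dagger - Q_0)P_s$ reduces the task to showing that the compressed operator $P_s Q_0 P_s$ is Hermitian; its matrix elements on the spanning set $\{\ket{s_j}\}$ are $\bra{s_j}Q_0\ket{s_k} = \braket{s_j|d_k}$, and the Hermitian symmetry of this array is exactly condition (i). For the prescribed action, expanding yields $Q\ket{s_j} = \ket{d_j} + Q_0^\dagger\ket{s_j} - P_s\ket{d_j}$, leaving only the identity $Q_0^\dagger\ket{s_j} = P_s\ket{d_j}$ to be verified. Since $Q_0$ annihilates $V_s^\perp$, its adjoint has image in $V_s$, so $Q_0^\dagger\ket{s_j}$ can be checked against $P_s\ket{d_j}$ by taking inner products with an arbitrary $\ket{s_k}$: $\bra{s_k}Q_0^\dagger\ket{s_j} = \overline{\braket{s_j|d_k}} = \braket{s_k|d_j} = \bra{s_k}P_s\ket{d_j}$, where condition (i) supplies the middle equality.

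The main subtlety is that conditions (i) and (ii) play structurally different roles and both are indispensable: (ii) is what licenses the definition of $Q_0$ on $V_s$, while (i) is what simultaneously ensures Hermiticity of the compressed part $P_s Q_0 P_s$ and the identity $Q_0^\dagger\ket{s_j} = P_s\ket{d_j}$ needed for the prescribed action. The freedom to add any Hermitian operator supported on $V_s^\perp$ to $Q$ reflects the fact that the hypotheses constrain $Q$ only through its action on $V_s$.
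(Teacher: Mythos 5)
Your proof is correct, and the sufficiency half takes a genuinely different route from the paper's. The paper first uses condition (ii) to reduce to a maximal linearly independent subset $\{\ket{s_j}\}_{j\in\mathbb{J}}$, then diagonalizes the Hermitian matrix $g_{jk}=\braket{s_j|d_k}$ by a unitary and writes $Q$ explicitly as $\sum_{c_j\neq 0}c_j^{-1}\ket{\tilde d_j}\bra{\tilde d_j}$ plus cross terms $\ket{\tilde d_j}\bra{\tilde s_j^\perp}+\ket{\tilde s_j^\perp}\bra{\tilde d_j}$ built from a biorthogonal system $\braket{\tilde s_j^\perp|\tilde s_k}=\delta_{jk}$, whose existence is exactly what the reduction to an independent subset buys. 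You instead let condition (ii) act at the level of well-definedness of the linear map $Q_0$ on the full span $V_s$ (no independent subset needed), and Hermitize by the symmetrization $Q=Q_0+Q_0^\dagger-P_sQ_0P_s$, with condition (i) entering twice: once to make the compression $P_sQ_0P_s$ Hermitian (its matrix elements on the spanning set are $\braket{s_j|d_k}$), and once for the identity $Q_0^\dagger\ket{s_j}=P_s\ket{d_j}$ that cancels the unwanted terms in $Q\ket{s_j}$. Your argument is shorter and coordinate-free, and it makes transparent that $Q$ is determined only up to an arbitrary Hermitian operator supported on $V_s^\perp$; the paper's construction, at the cost of the diagonalization and the dual vectors, delivers a concrete spectral-type formula for $Q$, which is closer in spirit to the explicit optimal-measurement operators it later exhibits in the proof of Theorem 3. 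Both arguments implicitly assume the finiteness (or boundedness) needed for $Q_0^\dagger$, respectively for the biorthogonal vectors, to exist, which is harmless for the finite error families considered in the paper.
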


Theorem~\ref{thm:qec} follows from the satisfaction of the two testable conditions in Theorem~\ref{thm:testable} for the errors $\{E_j\bar X^\tau\}$ with $\tau=0,1$, where $\bar X$ is a Pauli error that commutes with the stabilizer of the code $S$ and anticommutes with $\Delta H_\mathrm{eff}$. 
The full proof is presented in Supplementary Note 1.

The theorems and the lemma in this paper are also valid for infinite dimensional systems, as long as the SLD operator for the given parametric family of states is bounded.

\vspace{8pt}
{\noindent\bf Acknowledgments\\}
Discussions with Mankei Tsang, Ranjith Nair, and Pei-Qing Jin are gratefully acknowledged.
This work is funded by the Singapore Ministry of Education (partly through the Academic Research Fund Tier 3 MOE2012-T3-1-009), the National Research Foundation, Singapore (Grant No.\ WBS: R-710-000-008-271 and Grant No.\ NRF-NRFF2011-07), and the National Natural Science Foundation of China (No. 11304196).

\def\bibsection{\section*{Supplementary References}}

\renewcommand{\theequation}{S\arabic{equation}}
\setcounter{equation}{0}
\widetext

\section*{Supplementary Note 1: Detailed derivations of some results in the main text.\label{sec:theorems}}
\begin{proof}[Proof of equation (\ref{eq:rms}) given in the main text]
Let $Q$ be an arbitrary Hermitian operator on the Hilbert space associated with the outputs of a noisy channel $\cN$, and $\cL_\theta$ be an SLD operator for the noisy states $\cN(\rho_\theta)$ and assumed bounded. Then, 
\begin{align}
    \epsilon(\rho_\theta,\cN,Q) &= \big\langle\big(L_\theta\otimes\bm{1}-V^\dagger(Q\otimes\bm{1})V\big)^2\big\rangle_{\rho_\theta\otimes\rho_a}\nonumber\\
    &= F(\rho_\theta)+\tr\big[Q^2\cN(\rho_\theta)\big]-\tr\big[Q\cN\big(\{L_\theta,\rho_\theta\}\big)\big]\nonumber\\
    &= F(\rho_\theta)+\tr\big[Q^2\cN(\rho_\theta)\big]-\tr\big[Q\{\cL_\theta,\cN(\rho_\theta)\}\big]\nonumber\\
    &= \Delta_\mathrm{F}(\rho_\theta,\cN)+\tr\big[(Q-\cL_\theta)^2\cN(\rho_\theta)\big] \label{eq:s1},
\end{align}
where we have used the following relations:
\begin{equation}
\cN\left(\{L_\theta,\rho_\theta\}\right) 
=2\,\cN\left(\frac{\mathrm{d}\rho_\theta}{\mathrm{d}\theta}\right)
=2\,\frac{\mathrm{d}\cN(\rho_\theta)}{\mathrm{d}\theta}
=\big\{\cL_\theta,\cN(\rho_\theta)\big\}.
\end{equation}
As a result of equation (\ref{eq:s1}), we have $\Delta_\mathrm{F}(\rho_\theta,\cN)\leq\epsilon(\rho_\theta,\cN,Q)$ for every Hermitian operator $Q$, and the equality holds if (and only if) $Q\cN(\rho_\theta)=\cL_\theta\cN(\rho_\theta)$. Thus, we obtain $\Delta_\mathrm{F}(\rho_\theta,\cN)=\min_{Q\in\mathrm{Herm}(\cN)}\epsilon(\rho_\theta,\cN,Q)$. 
Moreover, by using the Kraus operators $E_j$ for $\cN$, it can be readily checked that 
\begin{equation}\label{eq:Kraus}
\epsilon(\rho_\theta,\cN,Q)= \sum_j\lVert(Q E_j-E_j L_\theta)\sqrt{\rho_\theta}\rVert_\mathrm{HS}^2
\end{equation}
with $\lVert O\rVert_\mathrm{HS}:=\sqrt{\tr(O^\dagger O)}$ being the Hilbert-Schmidt norm of the operator $O$. Thus we obtain $\Delta_\mathrm{F}(\rho_\theta,\cN)=\sum_j\lVert(\cL_\theta E_j-E_j L_\theta)\sqrt{\rho_\theta}\rVert_\mathrm{HS}^2$ by taking $Q=\cL_\theta$.
\end{proof}

% ============================================================
\begin{proof}[Proof of Theorem~\ref{thm:testable} given in the main text] Consider the case of pure state $|\psi_\theta\rangle$ first.
As a result of the necessary and sufficient condition, i.e., equation (\ref{eq:NS}) in the main text, and Theorem~\ref{thm:discretization} in the main text, QFI of $\ket{\psi_\theta}$ is preserved against a set $\{E_j\}$ of errors if and only if $\cL_\theta E_j\ket{\psi_\theta} = E_j L_\theta\ket{\psi_\theta}$ are satisfied for all $j$. 
Then, conditions (i) and (ii) in Theorem~\ref{thm:testable} in the main text are results of Lemma~\ref{lem:math} in the main text for two indexed families $\{E_j\ket{\psi_\theta}\}$ and $\{2 E_j \ket{\mathscr{D}_\theta\psi_\theta}\}$.

In the case of mixed states $\rho_\theta$, equation (\ref{eq:NS}) in the main text is equivalent to $\cL_\theta E_j|\psi_\theta\rangle=E_jL_\theta|\psi_\theta\rangle$ for all states $|\psi_\theta\rangle$ in the range of $\rho_\theta$. Choosing a set of linearly independent states $\{|\psi_{l,\theta}\rangle\}$ in the range of $\rho_\theta$, e.g., the eigenstates of $\rho_\theta$ corresponding to nonzero eigenvalues, applying Lemma~\ref{lem:math} in the main text to the two indexed families of states $\{E_j|\psi_{l,\theta}\rangle\}$ and $\{E_j L_\theta|\psi_{l,\theta}\rangle\}$ with composite index $(j,l)$, we obtain the testable conditions for mixed states.
\end{proof}

% ============================================================
\begin{proof}[Proof of Lemma~\ref{lem:math} given in the main text]
{\it Necessity.}
If there exists a Hermitian operator $Q$ such that $Q\ket{s_j}=\ket{d_j}$ for all $j$, then condition (i) can be obtained by using the hermicity of $Q$ as 
\begin{equation}
    \langle s_j|d_k\rangle=\langle s_j|Q|s_k\rangle=\langle d_j|s_k\rangle,
\end{equation}
whilst, condition (ii) is obvious by noting that $\sum_j\alpha_j\ket{d_j}=Q\sum_j\alpha_j\ket{s_j}$, which vanishes if $\sum_j\alpha_j\ket{s_j}$ does.

{\it Sufficiency.}
Assume that conditions (i) and (ii) are satisfied, then we shall explicitly construct a desired Hermitian $Q$ such that $Q\ket{s_j}=\ket{d_j}$ for all $j$. 
Firstly, let us choose a maximal subset of linearly independent vectors $\ket{s_j}$ and denote the set of corresponding indices by $\mathbb{J}$.
Secondly, condition (ii) implies that restricting $j$ in $\mathbb{J}$ is sufficient for constructing the $Q$.
Note that every $\ket{s_{j^\prime}}$ with $j^\prime\notin \mathbb{J}$ can be expressed as 
\begin{equation}\label{eq:linearly}
    \ket{s_{j^\prime}}=\sum_{j\in\mathbb{J}}\alpha_j\ket{s_j}
\end{equation}
with $\alpha_j$ being complex numbers.
If $Q\ket{s_j}=\ket{d_j}$ for all $j\in\mathbb{J}$, then from condition (ii) and equation \eqref{eq:linearly}, we have
\begin{equation}
    \ket{d_{j^\prime}}=\sum_{j\in\mathbb{J}}\alpha_j\ket{d_j} 
   =Q\sum_{j\in\mathbb{J}}\alpha_j\ket{s_j} 
   =Q\ket{s_{j^\prime}}.
\end{equation}
Therefore, condition (ii) ensures that every Hermitian operator $Q$ satisfying $Q\ket{s_j} = \ket{d_j}$ for all $j$ restricted in $\mathbb{J}$ must satisfy that for all $j$.
Thirdly, we explicitly construct $Q$ as follows. 
Condition (i) implies that the matrix $g$ defined by $g_{jk} = \langle s_j|d_k \rangle$ is Hermitian.
Therefore, $g$ can be diagonalized as 
\begin{equation}
    (u^\dagger gu)_{jk}=\braket{\tilde s_j|\tilde d_k}=c_j\delta_{jk},
\end{equation}
where $u$ is a unitary matrix, 
\begin{equation}
    \ket{\tilde s_j}=\sum_k u_{jk}\ket{s_k}, \qquad 
    \ket{\tilde d_j}=\sum_k u_{jk}\ket{d_k}, 
\end{equation}
and $c_j$ are real numbers.
Then, the Hermitian operator $Q$ is explicitly constructed as
\begin{equation}\label{eq:Q_opt}
    Q = \sum_{j|c_j\neq 0}\frac{1}{c_j}\ket{\tilde d_j}\bra{\tilde d_j} 
    + \sum_{j|c_j=0}\ket{\tilde d_j}\bra{\tilde s_j^\perp}+\ket{\tilde s_j^\perp}\bra{\tilde d_j},
\end{equation}
where $\ket{\tilde s_j^\perp}$ are vectors satisfying $\braket{\tilde s_j^\perp|\tilde s_k}=\delta_{jk}$ for all $j$ and $k$.
Such a set of $\ket{\tilde s_j^\perp}$ always exists as long as $\ket{\tilde s_j}$ are linearly independent~\cite{Chefles1998}.
It is easy to check that the Hermitian operator $Q$ defined above satisfies $ Q\ket{\tilde s_j} = \ket{\tilde d_j}$ for all $j$.
Therefore, 
\begin{align}
    Q\ket{s_j}=Q \sum_k u_{kj}^*\ket{\tilde s_k}=\sum_k u_{kj}^*\ket{\tilde d_k}\nonumber \\
    =\sum_l\Big(\sum_k u_{kj}^*u_{kl}\Big)\ket{d_l}=\ket{d_j}
\end{align}
are satisfied for all $j$ and the proof is completed.
\end{proof}

% ============================================================
\begin{proof}[Proof of Theorem~\ref{thm:qec} given in the main text] 
We have only to check that the two conditions in Theorem 2 in the main text are satisfied for the errors $\{E_j\bar X^\tau\}$ with $\tau=0,1$, where $\bar X$ is a Pauli error that commutes with the stabilizers of the code and anticommutes with $\Delta H_\mathrm{eff}$. 
The first condition reads
\begin{equation}\label{eq:first}
\langle\psi_\theta|L_\theta\bar X^\gamma E^\dagger_k E_j\bar X^\tau |\psi_\theta\rangle=\langle\psi_\theta|\bar X^\gamma E^\dagger_k E_j\bar X^\tau L_\theta|\psi_\theta\rangle,
\end{equation}
which holds true due to the following facts: (i) When $\gamma=\tau$, or $\gamma\not=\tau$ and $E_k^\dagger E_j$ is detectable, then both sides of equation (\ref{eq:first}) vanish. (ii) When $\gamma\not=\tau$ and $E_k^\dagger E_j$ is a stabilizer, equation~(\ref{eq:first}) is ensured by the facts that $L_\theta|\psi_\theta\rangle=-2i\Delta H_\mathrm{eff}|\psi_\theta\rangle$ and  $\bar X$ anticommutes with $\Delta H_\mathrm{eff}$.  

To show that the second condition is also satisfied, we at first identify an independent set $\{E_j\}_{j\in J}$ of errors such that $E_k^\dagger E_j$ are detectable errors for arbitrary $j,k\in J$.
We denote by $J_j$ the set of indices $l$ such that $E_j^\dagger E_l$ is a stabilizer. 
It is easy to check that $\{E_j(1\pm\bar X)|\psi_\theta\rangle\}_{j\in J}$, as well as $\{E_j(1\pm\bar X)L_\theta|\psi_\theta\rangle\}_{j\in J}$, is a set of mutually orthogonal states. 
As a result, both sets $\{E_j \bar X^\tau |\psi_\theta\rangle\}_{j\in J}$ and $\{E_j \bar X^\tau L_\theta|\psi_\theta\rangle\}_{j\in J}$ are linearly independent. 
If there are complex numbers $\alpha_{j\tau}$ such that $\sum_{j,\tau}\alpha_{j\tau} E_j\bar X^\tau|\psi_\theta\rangle=0$, then we have $\sum_{j\in J,\tau}\alpha_{j\tau}^\prime E_j\bar X^\tau|\psi_\theta\rangle=0$ with  $\alpha_{j,\tau}^\prime=\sum_{l\in J_j} \alpha_{l,\tau}$, from which it follows that  $\alpha^\prime_{j\tau}=0$ for all $j\in J$ and $\tau=0,1$ and 
\begin{equation}
    \sum_{j,\tau}\alpha_{j\tau} E_j\bar X^\tau L_\theta|\psi_\theta\rangle
    =\sum_{j\in J,\tau}\alpha_{j\tau}^\prime E_j\bar X^\tau L_\theta|\psi_\theta\rangle=0.
\end{equation}

In order to investigate the property of the optimal measurement, we denote
\begin{equation}
    Q=\sum_{j\in J}E_j\frac{\{\bar X,L_\theta |\psi_\theta\rangle\langle\psi_\theta|L_\theta\}}{\langle \psi_\theta|L_\theta\bar X|\psi_\theta\rangle}E_j^\dagger.    
\end{equation}
Due to the fact that the errors $E_j$ with the index $j$ being restricted in $J$ are independent, it is easy to check that 
\begin{equation}
\bar X^\tau E_j^\dagger QE_j\bar X^\tau|\psi_\theta\rangle=Q|\psi_\theta\rangle=L_\theta|\psi_\theta\rangle
\end{equation}
for arbitrary $\tau=0,1$ and $j$. 
It follows that $QE_j\bar X^\tau|\psi_\theta\rangle=E_j\bar X^\tau L_\theta|\psi_\theta\rangle$, therefore this Hermitian operator $Q$ is an SLD operator not only for $|\psi_\theta\rangle$ but also for noisy states under the set $\{E_j\}$ of errors, therefore, the measurement with respect to the eigenstates of $Q$ is optimal for the noisy states.
Moreover, $Q$ commutes with all the stabilizers and $\bar X$, so they have common eigenstates. 
When the code space is $2$-dimensional, all the stabilizer generator and $\bar X$ constitute a complete set of mutually commuting observables, therefore  the joint measurement of them is equivalent to that with respect to eigenstates of $Q$.
\end{proof}

% ====================================
\section*{Supplementary Note 2: Calculations for the example\label{sec:example}}
% ====================================

Let us consider the physical system that implements the metrological scheme in the main text. The whole system is composed of $N$ (physical) qubits. The dynamical equation is given by
\begin{equation}
    \frac{\mathrm{d}\rho(t)}{\mathrm{d}t} = \omega\hat C\rho(t)+ 
    \frac{\gamma_\mathrm{x}}2\hat D_\mathrm{x}\rho(t)+
    \frac{\gamma_\mathrm{y}}2\hat D_\mathrm{y}\rho(t)+
    \frac{\gamma_\mathrm{z}}2\hat D_\mathrm{z}\rho(t)
\end{equation}
where $\omega$ is the signal parameter to be estimated, $\gamma_\alpha$ for $\alpha=\mathrm{x,y,z}$ are the strengths of the noises. The coherent evolution sensing the signal parameter $\omega$ is described by the superoperator $\hat C\colon\rho\to-i[H,\rho]$ with $H$ being a Hermitian operator. The uncorrelated dephasing noises along different directions are described by the superoperators 
$\hat D_\mathrm{x}\colon\rho\to\sum_{i=1}^N (X_i\rho X_i-\rho)$, 
$\hat D_\mathrm{y}\colon\rho\to\sum_{i=1}^N (Y_i\rho Y_i-\rho)$, and 
$\hat D_\mathrm{z}\colon\rho\to\sum_{i=1}^N (Z_i\rho Z_i-\rho)$.
Henceforth, we assume that $\gamma_\mathrm{y}=0$.

We consider two scenarios with different generating operator $H$ of the sensing evolution and different probe states, i.e., initial states $\rho(0)$. 
(i)
In the raw GHZ-state scenario, $H=\frac12\sum_{i} Z_i$ and $\rho(0)=|\psi_{\rm rGHZ}\rangle\langle\psi_{\rm rGHZ}|$ with $|\psi_{\rm rGHZ}\rangle:=\frac1{\sqrt2}(|0\rangle^{\otimes N}+|1\rangle^{\otimes N})$ being the $N$-qubit GHZ state. 
(ii)
In the logical GHZ-state scenario, we divide $N$ qubits into $m$ blocks, each of which is composed of $n$ qubits. 
Here, $N=nm$ is assumed.
Let $O^{[i]}_j$ be the operator $O$ of the $j$th qubit in the $i$th block. 
The generating operator of the coherent evolution is $H=\frac12\sum_i\bar Z^{[i]}$ with $\bar Z^{[i]}:=\prod_jZ^{[i]}_j$ being the logical $Z$ operator for the $i$th block, while the initial state is $\rho(0)=|\psi_{\rm lGHZ}\rangle\langle\psi_{\rm lGHZ}|$ with $|\psi_{\rm lGHZ}\rangle:=\frac1{\sqrt2}(|\bar0\rangle^{\otimes m}+|\bar1\rangle^{\otimes m})$ being the logical GHZ state, where $|\bar0\rangle=\frac1{\sqrt2}(|+\rangle^{\otimes n}+|-\rangle^{\otimes n})$ and $|\bar1\rangle=\frac1{\sqrt2}(|+\rangle^{\otimes n}-|-\rangle^{\otimes n})$ are logical basis for the $n$-qubit phase-flip code in each block. 

In both these two scenarios, the superoperator $\hat D_\mathrm{z}$ commutes with $\hat C$, $\hat D_\mathrm{x}$, respectively, therefore the evolution of the total system is given by $\exp\left(\frac{\gamma_\mathrm{z} t}{2}\hat D_\mathrm{z}\right)\exp\left(\omega t\hat C+\frac{\gamma_\mathrm{x} t}{2}\hat D_\mathrm{x}\right)$.
As long as $N\gamma_\mathrm{x}^2t^2\ll1$ and $N\omega^2t^2\ll1$, according to the Trotter expansion, the evolution is well approximated by $\rho(t)\approx\cN(\rho_\omega)$ with 
\begin{equation}
\cN=\exp\left(\frac{\gamma_\mathrm{z} t}{2}\hat D_\mathrm{z}\right) \exp\left(\frac{\gamma_\mathrm{x} t}{2}\hat D_\mathrm{x}\right)
\quad\mbox{and}\quad
\rho_\omega=\exp\left(\omega t\hat C\right)\rho(0).    
\end{equation}
It can be shown by some algebras that 
\begin{equation}
    \exp\left(\frac{\gamma_\mathrm{x} t}{2}\hat D_\mathrm{x}\right) = \prod_{i=1}^N\hat V_{\mathrm{x},i}
    \quad\mbox{and}\quad
    \exp\left(\frac{\gamma_\mathrm{z} t}{2}\hat D_\mathrm{z}\right) = \prod_{i=1}^N\hat V_{\mathrm{z},i},
\end{equation}
where the superoperators $\hat V_{\mathrm{x},i}$ and $\hat V_{\mathrm{z},i}$ are defined by $\hat V_{\mathrm{x},i}:\rho \mapsto (1-p_\mathrm{x})\rho+p_\mathrm{x}X_i\rho X_i$ and $\hat V_{\mathrm{z},i}:\rho \mapsto (1-p_\mathrm{z})\rho+p_\mathrm{z} Z_i\rho Z_i$ with $p_\mathrm{x}=(1-e^{-\gamma_\mathrm{x} t})/2$ and $p_\mathrm{z}=(1-e^{-\gamma_\mathrm{z} t})/2$.

In the raw GHZ-state scenario, the parametric states read $\rho_\omega=|\psi_\omega\rangle\langle\psi_\omega|$ with 
\begin{equation}
    |\psi_\omega\rangle:=\exp\Big(-\frac{i\omega t}{2} \sum_{i=1}^N Z_i\Big)|\psi_\mathrm{rGHZ}\rangle=\frac1{\sqrt2}(e^{-iN\omega t/2}|0\rangle^{\otimes N}+e^{iN\omega t/2}|1\rangle^{\otimes N}).
\end{equation}
Let us denote by $\hat O$ the superoperator defined by $\hat O\rho=O\rho O^\dagger$ for an operator $O$.
Since $\hat X_i \hat Z_j = \hat Z_j \hat X_i$ for all $i$ and $j$, the effective noisy channel $\cN$ can be expressed by 
\begin{equation}
    \cN = \prod_{i=1}^N (1-p_\mathrm{x}+p_\mathrm{x} \hat X_i) 
                        (1-p_\mathrm{z}+p_\mathrm{z} \hat Z_i).
\end{equation}
Note that $\rho_\omega$ is in the subspace $\cC$ spanned by $|0\rangle^{\otimes N}$ and $|1\rangle^{\otimes N}$, which is a bit-flip code capable of correcting less than or equal to $(N-1)/2$ bit-flip ($X$) errors.
The logical Pauli operators can be defined by $X_\mathrm{L}=\prod_{i=1}^N X_i$, $Z_\mathrm{L}=\prod_{i=1}^N Z_i$, and $Y_\mathrm{L} = -i Z_\mathrm{L} X_\mathrm{L}$, respectively. 
On this bit-flip code space $\cC$, each $\hat Z_i$ has the same effect as $\hat Z_\mathrm{L}$, therefore 
\begin{align}
    \cN|_\cC &= (1-p_\mathrm{z}+p_\mathrm{z} \hat Z_\mathrm{L})^N \prod_{i=1}^N (1-p_\mathrm{x}+p_\mathrm{x} \hat X_i) \\
    &= \left[\frac{1+(1-2p_\mathrm{z})^N}{2}+\frac{1-(1-2p_\mathrm{z})^N}{2}\hat Z_\mathrm{L}\right]
    \sum_{I_\mathrm{x} \subseteq I} (1-p_\mathrm{x})^{N-|I_\mathrm{x}|}\prod_{i\in I_\mathrm{x}}\hat X_i,
\end{align}
where $I$ denotes the index set for the qubits.
Moreover, different combinations of not larger than $(N-1)/2$ bit-flip errors map the code space $\cC$ into orthogonal subspaces.
Larger than $(N-1)/2$ bit-flip errors can be thought of as bit-flip errors occurring on the complement together with a logical $X$ error, namely, $\prod_{i\in I_\mathrm{x}}\hat X_i=\hat X_\mathrm{L}\prod_{i\notin I_\mathrm{x}}\hat X_i$ for an arbitrary index subset $I_\mathrm{x}$ for the qubits.
This indicates the following decomposition:
\begin{equation}\label{seq:NC}
    \cN|_\cC = \sum_{I_\mathrm{x}:|I_\mathrm{x}|\leq(N-1)/2}\hat S(|I_\mathrm{x}|)\prod_{i\in I_\mathrm{x}}\hat X_i,
\end{equation}
with
\begin{equation}
    \hat S(k) := [(1-p_\mathrm{x})^{N-k}+(1-p_\mathrm{x})^k\hat X_\mathrm{L}] \left[\frac{1+(1-2p_\mathrm{z})^N}{2}+\frac{1-(1-2p_\mathrm{z})^N}{2}\hat Z_\mathrm{L}\right].
\end{equation}
Different terms in the sum of equation (\ref{seq:NC}) map $\rho_\omega$, which is in the code space $\cC$, into orthogonal subspaces.
Because (i) every $\hat X_i$ is a unitary channel which preserves the QFI, and (ii) $F[\lambda\rho_\omega+(1-\lambda)\eta_\omega]=\lambda F(\rho_\omega)+(1-\lambda) F(\eta_\omega)$ when the parametric families $\rho_\omega$ and $\eta_\omega$ of density operators are in orthogonal subspaces, where $0\leq\lambda\leq1$, we get 
\begin{align}
    F[\cN(\rho_\omega)] &= \sum_{k=1}^{(N-1)/2}\binom{N}{k}a_k F\left[\frac{\hat S(k)\rho_\omega}{a_k}\right],
\end{align}
where $a_k$ given by
\begin{equation}
    a_k :=\tr[\hat S(k)\rho_\omega] = p_\mathrm{x}^k(1-p_\mathrm{x})^{N-k}+p_\mathrm{x}^{N-k}(1-p_\mathrm{x})^k.
\end{equation}
are independent of $\omega$. 
Each $\hat S(k)\rho_\omega/a_k$ is a $2\times2$ normalized density operator on $\cC$, therefore its QFI can be obtained by the explicit expression for $2\times2$ density matrices $\varrho_\omega$~\cite{Dittmann1999}:
\begin{equation}\label{seq:QFI2d}
    F(\varrho_\omega) = 
    \tr\left[ \frac{\partial\varrho_\omega}{\partial\omega}\frac{\partial\varrho_\omega}{\partial\omega} + \frac{1}{\det(\varrho_\omega)}(\mathbf{1}-\varrho_\omega)\frac{\partial\varrho_\omega}{\partial\omega} (\mathbf{1}-\varrho_\omega)\frac{\partial\varrho_\omega}{\partial\omega} \right].
\end{equation}
Reminding that 
$\rho_\omega=\frac12 P[1+\cos(N\omega t)\bar X+\sin(N\omega t)\bar Y]P$ with $P=|0\rangle\langle0|^{\otimes N}+|1\rangle\langle1|^{\otimes N}$, $\hat S(k)\rho_\omega/a_k$ can be expressed in the form of 
\begin{equation}
    \frac{\hat S(k)\rho_\omega}{a_k} = \frac12 P[1+x_k\cos(N \omega t)\bar X+y_k\sin(N \omega t)\bar Y]P,
\end{equation}
where the coefficients are given by
\begin{align}
    x_k =& (1-2p_\mathrm{z})^N,\nonumber\\
    y_k =& (1-2p_\mathrm{z})^N \frac{(1-p_\mathrm{x})^{N-2k}-p_\mathrm{x}^{N-2k}}{(1-p_\mathrm{x})^{N-2k}+p_\mathrm{x}^{N-2k}}
\end{align}
Through equation (\ref{seq:QFI2d}), we get the QFI of the normalized version of $\hat S(k)\rho_\omega$ as
\begin{equation}\label{seq:Fk}
    F_k\equiv F\left(\frac{\hat S(k)\rho_\omega}{a_k}\right)=\left[1-
    \frac{2(1-x_k^2)(1-y_k^2)}
    {2-x_k^2-y_k^2+(y_k^2-x_k^2)\cos2N\omega t}\right].
\end{equation}
Then, the QFI of the noisy states is given by $F[\rho(t)]=\sum_{k=0}^{(N-1)/2}\binom{N}{k}a_k F_k$.

% ============================================================


\begin{thebibliography}{00}
\bibitem{Caves1981} Caves, C. M. 
Quantum-mechanical noise in an interferometer.
\textit{Phys. Rev. D} \textbf{23}, 1693--1708 (1981). 

\bibitem{Yurke1986} Yurke, B., McCall,  S. L. \& Klauder, J. R.
SU(2) and SU(1,1) interferometers.
\textit{Phys. Rev. A} \textbf{33}, 4033--4054 (1986). 

\bibitem{Wineland1992} Wineland, D. J., Bollinger, J. J., Itano, W. M., Moore, F. L. \& Heinzen, D. J.
Spin squeezing and reduced quantum noise in spectroscopy.
\textit{Phys. Rev. A} \textbf{46}, R6797--R6800 (1992).

\bibitem{Holland1993} Holland, M. J. \& Burnett K.
Interferometric detection of optical phase shifts at the Heisenberg limit.
\textit{Phys. Rev. Lett.} \textbf{71}, 1355--1358 (1993).

\bibitem{Dowling1998} Dowling, J. P.
Correlated input-port, matter-wave interferometer: Quantum-noise limits to the atom-laser gyroscope.
\textit{Phys. Rev. A} \textbf{57}, 4736--4746 (1998). 

\bibitem{Giovannetti2004} Giovannetti, V., Lloyd, S. \& Maccone., L.
Quantum-enhanced measurements: beating the standard quantum limit.
\textit{Science} \textbf{306}, 1330--1336 (2004).

\bibitem{Giovannetti2006} Giovannettki, V., Lloyd, S. \& Maccone, L. 
Quantum metrology.
\textit{Phys. Rev. Lett.} \textbf{96}, 010401 (2006).

\bibitem{Giovannetti2011} Giovannetti., V. Lloyd, S. \& Maccone., L.
Advances in quantum metrology.
\textit{Nature Photo.} \textbf{5}, 222--229 (2011).

\bibitem{Huelga1997} Huelga, S. F. \textit{et al.} 
Improvement of frequency standards with quantum entanglement.
\textit{Phys. Rev. Lett.} \textbf{79}, 3865--3868 (1997). 

\bibitem{Escher2011} Escher, B. M., de Matos Filho, R. L. \& Davidovich, L.
General framework for estimating the ultimate precision limit in noisy quantum-enhanced metrology.
\textit{Nat. Phys.} \textbf{7}, 406--411 (2011).

\bibitem{Demkowicz-Dobrzanski2012} Demkowicz-Dobrza\'nski, R., Ko\l{}ody\'nski, J. \& Gu\c{t}\u{a}, M.
The elusive Heisenberg limit in quantum-enhanced metrology.
\textit{Nat. Commun.} \textbf{3}, 1063 (2012).

\bibitem{Chaves2013} Chaves, R., Brask, J. B., Markiewicz, M., Ko\l{}ody\'{n}ski, J. \& Ac\'\i{}n, A.
Noisy metrology beyond the standard quantum limit.
\textit{Phys. Rev. Lett.} \textbf{111}, 120401 (2013).

\bibitem{Tsang2013} Tsang, M.
Quantum metrology with open dynamical systems.
\textit{New J. Phys.} \textbf{15}, 073005 (2013). 

% ====== Photon loss ======
\bibitem{Rubin2007} Rubin, M. A. \& Kaushik, S. 
Loss-induced limits to phase measurement precision with maximally entangled states.
\textit{Phys. Rev. A} \textbf{75}, 053805 (2007).

\bibitem{Huver2008} Huver, S. D., Wildfeuer, C. F. \& Dowling, J. P.
Entangled Fock states for robust quantum optical metrology, imaging, and sensing.
\textit{Phys. Rev. A} \textbf{78}, 063828 (2008).

\bibitem{Dorner2009} Dorner, U. \textit{et al.}
Optimal quantum phase estimation.
\textit{Phys. Rev. Lett.} \textbf{102}, 040403 (2009).

\bibitem{Demkowicz-Dobrzanski2009} Demkowicz-Dobrzanski, R. \textit{et al.} 
Quantum phase estimation with lossy interferometers.
\textit{Phys. Rev. A} \textbf{80}, 013825 (2009).

\bibitem{Lee2009} Lee, T.-W. \textit{et al.} 
Optimization of quantum interferometric metrological sensors in the presence of photon loss.
\textit{Phys. Rev. A} \textbf{80}, 063803 (2009).

\bibitem{Maccone2009} Maccone, L. \& De Cillis, G.  
Robust strategies for lossy quantum interferometry.
\textit{Phys. Rev. A} \textbf{79}, 023812 (2009).

\bibitem{Ono2010} Ono, T. \& Hofmann, H. F. 
Effects of photon losses on phase estimation near the Heisenberg limit using coherent light and squeezed vacuum. 
\textit{Phys. Rev. A} \textbf{81}, 033819 (2010). 

\bibitem{Joo2011} Joo, J., Munro,  W. J. \& Spiller, T. P.
Quantum metrology with entangled coherent states.
\textit{Phys. Rev. Lett.} \textbf{107}, 083601 (2011).

\bibitem{Jiang2012} Jiang, K. \textit{et al.}
Strategies for choosing path-entangled number states for optimal robust quantum-optical metrology in the presence of loss.
\textit{Phys. Rev. A} \textbf{86}, 013826 (2012).

% ====== Photon loss experiment ======
\bibitem{Spagnolo2012} Spagnolo, N. \textit{et al.} 
Phase estimation via quantum interferometry for noisy detectors.
\textit{Phys. Rev. Lett.} \textbf{108}, 233602 (2012).

\bibitem{Kacprowicz2010} Kacprowicz, M., Demkowicz-Dobrzanski, R., Wasilewski, W., Banaszek, K. \& Walmsley, I. A. 
Experimental quantum-enhanced estimation of a lossy phase shift.
\textit{Nature Photon.} \textbf{4}, 357 (2010).
% ====== END Photon loss experiment ======

% ====== Phase diffusion ======
\bibitem{Genoni2011} Genoni, M. G., Olivares, S. \& Paris, M. G. A.
Optical phase estimation in the presence of phase diffusion.
\textit{Phys. Rev. Lett.} \textbf{106}, 153603 (2011).

\bibitem{Genoni2012} Genoni, M. G. \textit{et al.} 
Optical interferometry in the presence of large phase diffusion.
\textit{Phys. Rev. A} \textbf{85}, 043817 (2012).
% ====== END Phase diffusion ======


\bibitem{Macchiavello2002} Macchiavello, C., Huelga, S.F., Cirac, J.I., Ekert, A.K. 
\& Plenio, M.B.
Decoherence and quantum error correction in frequency standards, 
in \textit{Quantum Communication, Computing, and Measurement 2}, p.\ 337--345
(Kluwer Academic Publishers, 2002).

\bibitem{Preskill2000} Preskill, J.
Quantum clock synchronization and quantum error correction.
Preprint at http://arxiv.org/abs/quant-ph/0010098 (2000).


% ====== QEC for metrology ======
\bibitem{Dur2014} D\"ur, W., Skotiniotis, M., Fr\"owis, F. \&  Kraus B. 
Improved quantum metrology using quantum error correction.
\textit{Phys. Rev. Lett.} \textbf{112}, 080801 (2014).

\bibitem{Kessler2014} Kessler, E. M., Lovchinsky, I., Sushkov, A. O. \& Lukin, M. D.
Quantum error correction for metrology.
\textit{Phys. Rev. Lett.} \textbf{112}, 150802 (2014).

\bibitem{Arrad2014} Arrad, G., Vinkler, Y., Aharonov, D. \& Retzker, A.
Increasing sensing resolution with error correction.
\textit{Phys. Rev. Lett.} \textbf{112}, 150801 (2014).

\bibitem{Ozeri2013} Ozeri, R.
Heisenberg limited metrology using quantum error-correction codes.
Preprint at http://arxiv.org/abs/1310.3432 (2013).
% ====== END QEC for metrology ======

% ====== Quantum error correction  ======
\bibitem{Shor1995} Shor, P. W.
Scheme for reducing decoherence in quantum computer memory.
\textit{Phys. Rev. A} \textbf{52}, R2493--R2496 (1995).

\bibitem{Bennett1996} Bennett, C. H., DiVincenzo, D. P., Smolin, J. A. \& Wootters, W. K.
Mixed-state entanglement and quantum error correction.
\textit{Phys. Rev. A} \textbf{54}, 3824--3851 (1996).

\bibitem{Steane1996} Steane, A. M.
Error correcting codes in quantum theory.
\textit{Phys. Rev. Lett.} \textbf{77}, 793--797 (1996).

\bibitem{Gottesman1996} Gottesman, D.
Class of quantum error-correcting codes saturating the quantum Hamming bound.
\textit{Phys. Rev. A} \textbf{54}, 1862--1868 (1996).

\bibitem{Knill1997} Knill, E. \& Laflamme, R.
Theory of quantum error-correcting codes.
\textit{Phys. Rev. A} \textbf{55}, 900--911 (1997).

\bibitem{Yu2008} Yu, S., Chen, Q., Lai, C. H., \& Oh, C. H.,
Nonadditive Quantum Error-Correcting Code.
\textit{Phys. Rev. Lett.} \textbf{101}, 090501 (2008).

\bibitem{Yu2013} Yu, S., Bierbrauer, J., Dong, Y., Chen, Q. \& Oh, C. H.
All the stabilizer codes of distance 3. 
\textit{{IEEE} Trans. Inform. Theory} \textbf{59}, 5179--5185 (2013).
% ====== END Quantum error correction  ======

\bibitem{Helstrom1976} Helstrom, C. W. 
\textit{Quantum Detection and Estimation Theory} (Academic Press, 1976).

\bibitem{Holevo1982} Holevo, A. S.
\textit{Probabilistic and Statistical Aspects of Quantum Theory} (North-Holland Publishing Company, 1982).

\bibitem{Braunstein1994} Braunstein, S. L. \& Caves, C. M.
Statistical distance and the geometry of quantum states.
\textit{Phys. Rev. Lett.} \textbf{72}, 3439--3443 (1994).

\bibitem{WisemanBook} Wiseman, H. M. \& Milburn,  G. J.
\textit{Quantum Measurement and Control} (Cambridge Univ. Press, 2009).

\bibitem{Paris2009} Paris, M. G. A.
Quantum estimation for quantum technology.
\textit{Int. J. Quant. Inf.} \textbf{7}, 125--137 (2009).

\bibitem{Provost1980} Provost, J. \&  Vallee, G.
Riemannian structure on manifolds of quantum states.
\textit{Commun. Math. Phys.} \textbf{76}, 289 (1980).

\bibitem{Shapere1989} Shapere, A. \& Wilczek, F. (Editors).
\textit{Geometric Phase in Physics} (World Scientific, 1989).

\bibitem{Personick1971} Personick, S. 
Application of quantum estimation theory to analog communication over quantum channels.
\textit{IEEE Trans. Inform. Theor.} \textbf{17}, 240--246 (1971).

\bibitem{Ozawa2003} Ozawa, M.
Universally valid reformulation of the Heisenberg uncertainty principle on noise and disturbance in measurement.
\textit{Phys. Rev. A} \textbf{67}, 042105 (2003).


\end{thebibliography}

\begin{thebibliography}{99}

\bibitem{Chefles1998} Chefles, A.
Unambiguous discrimination between linearly independent quantum states.
\textit{Phys. Lett. A} \textbf{239}, 339 (1998).

\bibitem{Dittmann1999} Dittmann, J.
Explicit formulae for the Bures metric.
\textit{J. Phys. A: Math. Gen.} \textbf{32}, 2663 (1999).

\end{thebibliography}
\end{document}